\numberwithin{equation}{section} % Equations numbered according to sections
\theoremstyle{plain}
  \newtheorem{prop}{Proposition}
\theoremstyle{definition}
  \newtheorem{remark}{Remark}
  \numberwithin{prop}{section}
   \numberwithin{cor}{section}
   \numberwithin{remark}{section}
\title{\Large\bfseries Ensemble inter-relations  in random matrix theory}%
\author{Peter J. Forrester}
\date{}
\begin{document}

\maketitle

School of Mathematics and Statistics,  The University of Melbourne,
Victoria 3010, Australia. \: \: Email: {\tt pjforr@unimelb.edu.au}; \\

\bigskip

\begin{abstract}
\noindent 
The ensemble inter-relations to be considered are special features of classical cases, where the joint eigenvalue probability density can be computed explicitly. Attention will be focussed too on the consequences of these inter-relations, most often in relation to gap probabilities. A highlight from this viewpoint is an evaluation formula for the gap probability generating function for the circular orthogonal and circular symplectic ensembles (examples of Pfaffian point processes), in terms of the  gap probability generating function for real orthogonal matrices chosen with Haar measure ensembles (examples of determinantal point processes). The classes of inter-relations which lead to this result involve the superposition of the eigenvalue point processes of two independent classical ensembles, or consideration of the singular values of the classical ensembles with an evenness symmetry,
and the decimation operation of integrating over every second eigenvalue. It turns out that the probability density functions encountered through the consideration of superposition and decimation can in some cases be obtained via the consideration of various rank one matrix perturbations. These allow for different understandings of some of the inter-relations in this class. We make note too  of inter-relations --- which we consider generally as an identity or evaluation formula of a statistic linking two distinct random matrix ensemble --- that fall outside of this class, applying to the spectral form factor, and ones relating to discrete determinantal point processes. Not part of our review are the so-called duality relations of random matrix theory. Their consideration warrants a separate discussion.
\end{abstract}

\vspace{3em}

\section{Introduction}

For purposes of the present review, the term inter-relations in random matrix theory refers to an
identity, or evaluation formula for a statistic, which links two distinct random matrix ensembles. We say this
at the outset, as another class of inter-relations is the interplay between random matrix theory
and (seemingly) distinct fields of study, for example the statistical mechanics of log-potential Coulomb systems
\cite{Dy62}, \cite{Fo10}, \cite[\S 4]{BF22}, \cite{Se24}. Dualities are a special class of random matrix inter-relations, involving
the interchange of certain parameters. As perhaps the simplest example of the latter, consider the class
of Hermitian Wigner matrices $H = {1 \over 2} (X + X^\dagger)$, where the $N \times N$ random matrix $X$ has
all elements independently distributed with zero mean and variance $\sigma^2$ (for complex elements with zero mean,
$\sigma^2 := \langle |x_{ij} |^2 \rangle$). Denote the ensemble of such matrices by $\mathcal H_N$. Then one has
\cite[equivalent to Prop.~11]{FG06}
\begin{equation}\label{1.0}
\langle \det ( x \mathbb I_N - H) \rangle_{H \in \mathcal H_N} = i^{-N}
\langle  ( i x  - h)^N \rangle_{h \in \mathcal N(0,\sigma^2/2)},
 \end{equation}
 where $ \mathcal N(0,\sigma^2/2)$ denotes the zero mean normal distribution with variance $\sigma^2/2$. 
 In the context of a duality relation, the essential point here is the interchange of the pair $(N,n)$, where
on the left hand side $N$ is the matrix size, and $n=1$ is the power of the characteristic polynomial, while
on the right hand side the meaning of these parameters is reversed. For reasons of space,
duality type inter-relations will not be  considered further in this work.

As an introductory example of an inter-relation that is not also a duality formula, one can consult the
pioneering papers on random matrix theory by Dyson from the early 1960's, in particular the work
\cite{Dy62a}. In a previous paper \cite{Dy62} in the series, Dyson had introduced three distinguished
ensembles of random unitary matrices, referred to collectively as the circular ensembles. Of these two
are relevant for present purposes. One is the so called
circular unitary ensemble (CUE) consisting of matrices $U$ from the group $U(N)$ chosen with Haar
measure, while the other is the circular orthogonal
ensemble (COE) of symmetric unitary matrices; its members can be constructed in terms of CUE matrices by forming $U^T U$. A result from \cite{Dy62}
gives that the joint eigenvalue probability density function (PDF) of the circular ensembles is given by
\begin{equation}\label{1.1}
{1 \over Z_{N,\beta}} \prod_{1 \le j < k \le N} |
  e^{i \theta_k} - e^{i \theta_j} |^\beta,
   \end{equation}
with normalisation
 \begin{equation}\label{2.0a}
 Z_{N,\beta} = (2 \pi)^N {\Gamma(1 + N \beta/2) \over (\Gamma(1 + \beta/2))^N}.
  \end{equation}
  The cases $\beta = 1$ and 2 ($\beta$ is referred to as the Dyson index)
  correspond to the COE and CUE respectively.\footnote{The result
  for the CUE was in fact derived earlier in \cite{We39}; see \cite{DF17} for more historical details.} 
  
  Let $E_{N,\beta}(k;(0,\phi);{\rm CE}_N(\beta))$ denote the
  probability that there are exactly $k$ eigenvalues in the interval $(0,\phi)$ --- referred to as
  a (conditioned) gap probability --- in the circular ensemble
  with eigenvalue PDF (\ref{1.1}). Dyson was able to deduce that \cite[Eq.~(113)]{Dy62a}
   \begin{equation}\label{2.0b}
   E_{N,2}(0;(0,\phi);{\rm CUE}_N) =  E_{N,1}(0;(0,\phi))\Big (  E_{N,1}(0;(0,\phi);{\rm COE}_N) +  E_{N,1}(1;(0,\phi));{\rm COE}_N \Big ),
  \end{equation}
  so providing an inter-relation between gap probabilities in the CUE, and in the COE. 
  The structure of (\ref{2.0b})    led Dyson to consider the ensemble alt$\, ({\rm COE}_N \cup {\rm COE}_N)$. Here 
  $ {\rm COE}_N \cup {\rm COE}_N$ denotes the superposition of the eigenvalues of two independently chosen
  COEs, while alt says that only every second eigenvalue in this point process of $2N$ eigenvalues on the unit
  circle is
  to be observed. It was realised that (\ref{2.0b}) would result if it were true that\footnote{The symbol $\mathop{=}\limits^{\rm d}$
  denotes equal in distribution.}
    \begin{equation}\label{2.0c}
    {\rm alt} \,   ({\rm COE}_N \cup {\rm COE}_N) \mathop{=}\limits^{\rm d} {\rm CUE}_N,
    \end{equation}
    a result which was subsequently proved by Gunson \cite{Gu62}. This in turn feeds back into implying an
    infinite family of gap probability identities, generalising (\ref{2.0b}), which are most succinctly stated
    in terms of the generating functions\footnote{For purposes of comparison with earlier literature, we note that
  for reason stated in \cite[text below (3.33)]{Fo06} (observe  too the structure of (\ref{2.31A}) below), in this latter reference, $\pm$ on the RHS of the second generating
  function is replaced by $\mp$.}  
  \begin{align}\label{2.0d}    
{\mathcal E}^{{\rm CUE}_N}((0,\phi);\xi) & = \sum_{k=0}^\infty (1 - \xi)^k E_{N,2}(k;(0,\phi);{\rm CUE}_N) \nonumber \\
{\mathcal E}_{N,1}^{\pm }((0,\phi);\xi) & = \sum_{k=0}^\infty (1 - \xi)^k \Big ( E_{N,1}(2k;(0,\phi);{\rm COE}_N) +E_{N,1}(2k\pm 1;(0,\phi);{\rm COE}_N)  \Big ).
 \end{align}
 Thus one can deduce from (\ref{2.0c}) that  \cite{Me92}, \cite[Eq.~(3.29)]{Fo06}, \cite[Prop.~8.4.4]{Fo10}
   \begin{equation}\label{2.0e}
 {\mathcal E}^{{\rm CUE}_N}((0,\phi);\xi)  =   {\mathcal E}_{N,1}^{- }((0,\phi);\xi)  {\mathcal E}_{N,1	}^{+ }((0,\phi);\xi) 
 \end{equation}
 
 Our interest in this review is in random matrix inter-relations that are distinct from duality relations, the majority of which
 relate to superpositions of ensembles and operations such as alt, collectively referred to as decimations.
  In this circumstance, identities between conditioned gap probabilities will be seen to follow
 as a consequence. Outside of this class, but also considered, are inter-relations which follow by
 consideration of the spectral form factor, and inter-relations which give rise to discrete determinantal point
 processes.

\section{Inter-relations involving superpositions}
\subsection{Preliminary theory and notations}
The eigenvalue PDF of an ensemble of real symmetric (complex Hermitian) random matrices possessing
an orthogonal (unitary) symmetry has the functional form of being proportional to
  \begin{equation}\label{3.0}
 \prod_{l=1}^N w_\beta(x_l) \prod_{1 \le j < k \le N} |x_k - x_j|^\beta,
 \end{equation} 
 for $\beta = 1$ and 2 respectively. These are to be denoted OE${}_N(w_1)$ and UE${}_N(w_2)$, with $w_\beta$ referred to as
 the weight. For example, with the joint distribution of the matrices being taken as proportional to $e^{-(\beta/2) {\rm Tr} \, X^2}$
 (this corresponds to an independent Gaussian distribution on the diagonal and upper diagonal matrix entries), one observes
 the invariance under $X \mapsto V^\dagger X V$ for $V \in O(N)$ ($X$ real symmetric) and for $V \in U(N)$ ($X$ complex Hermitian).
 This gives rise to (\ref{3.0}) with $w_1(x) = e^{-x^2/2}$ and $w_2(x) = e^{-x^2}$; see \cite[Prop.~1.3.4]{Fo10}. At the matrix level,
 the names Gaussian orthogonal ensemble (GOE) and Gaussian unitary ensemble (GUE) are given to these particular random
 matrices, thus highlighting the invariances. Another case of interest is Hermitian matrices where each entry is a $2 \times 2$ block of complex numbers representing
 a quaternion; see \cite[\S 1.3.2]{Fo10}. This structure is preserved by the symplectic symmetry $X \mapsto V^\dagger X V$ for $V \in Sp(2N)$.
 Assuming that the joint element distribution too has this symmetry, the eigenvalue PDF of the $N$ independent
 eigenvalues (here the eigenvalues
 are doubly degenerate; see e.g.~\cite[Exercises 1.3 q.1]{Fo10}) has the functional form (\ref{3.0}) with $\beta = 4$, which is then
 denoted SE${}_N(w_4)$. In the Gaussian case the specific naming is GSE (Gaussian symplectic ensemble).
 
 The functional form (\ref{3.0}) contains too  eigenvalue PDFs equivalent to those of the circular ensembles. This comes about by
transforming from the unit circle to the real line according to the stereographic projection
  \begin{equation}\label{3.1}
  e^{i \theta_l} = {1 + i x_l \over 1 - i x_l}, \quad x_l = \tan {\theta_l \over 2},
 \end{equation} 
 where $x_l \to \pm \infty$ maps to $\theta_l \to \pm \pi$.
 Up to proportionality, (\ref{3.0}) results with
   \begin{equation}\label{3.1a}
   w_\beta(x) = {1 \over (1 + x^2)^{\beta(N-1)/2 + 1}}.
   \end{equation}
   For an ensemble of $2N$ eigenvalues, order the eigenvalues $x_{1} > x_2 > \cdots > x_{2N}$, and denote the corresponding
   eigenvalue PDF by ME${}_{2N,\beta}(w)$. Denote by even(ME${}_{2N,\beta}(w)$) 
   (odd(ME${}_{2N,\beta}(w)$) the ensemble obtained
   from the even (odd) labelled eigenvalues only. Denote by alt(ME${}_{2N,\beta}(w))$ the ensemble which is
  even(ME${}_{2N,\beta}(w))$ with probability ${1 \over 2}$, and    odd(ME${}_{2N,\beta}(w))$ with probability ${1 \over 2}$. 
  In terms of this terminology, the just
  specified stereographic projection of the circular ensembles, together with (\ref{2.0c}), implies
   \begin{equation}\label{3.0c}
    {\rm alt} \,   ({\rm OE}_N(w_1) \cup {\rm OE}_N(w_1)) \mathop{=}\limits^{\rm d} {\rm UE}(w_2), \quad w_1(x) = (1 + x^2)^{-(N+1)/2}, \: w_2(x) = (1 + x^2)^{-N}.
    \end{equation}
    In fact $w_1$ and $w_2$ herein are, up to a linear shift of variables, the unique weights on the real line with this property   \cite[Th.~4.6]{FR01}.
    
    Special attention is warranted in the case of (\ref{3.0}) with $\beta = 2$ and
  \begin{equation}\label{3.0d}   
  w_2(x) = x^a (1 - x)^b \mathbbm 1_{0 < x < 1},
  \end{equation}
  for particular $a,b$.
  Thus after the change of variables $x_l = \cos \theta_l$ $(0 < \theta_l < \pi)$, one obtains the PDF for the 
  independent eigenvalues
  $\{ e^{i \theta_l} \}_{1 \le j < N}$  $(0 < \theta_l < \pi)$ of  the particular real orthogonal groups with Haar measure
  $O^+(2N)$ ($a=b=-1/2$),  $O^+(2N+1)$ ($a=1/2$, $b=-1/2$), $O^-(2N+1)$ $(a=-1/2$, $b=1/2$) and 
  $O^-(2N+2)$ ($a=b=1/2$); see \cite[Prop 3.7.1]{Fo10}. 
    
    We introduce too the PDF on positive eigenvalues only, proportional to
   \begin{equation}\label{4.0}
 \prod_{l=1}^N w_2(x_l) \prod_{1 \le j < k \le N} (x_k^2 - x_j^2)^2,
 \end{equation}  
 which is to be denoted chUE$(w_2)$. This results from matrices with a unitary 
 symmetry for which the eigenvalues come in $\pm$ pairs.
  For example, with $Z$ an $n \times N$ ($n \ge N$) matrix of independent Gaussian entries ---
 a so-called rectangular complex Ginibre matrix \cite{BF22} --- the matrix
    \begin{equation}\label{4.0a} 
W = \begin{bmatrix} 0_{n \times n} & Z \\
Z^\dagger & 0_{N \times N}   \end{bmatrix}
 \end{equation}  
 is of this sort (it further has $n - N$ zero eigenvalues), with the positive eigenvalues having
 PDF    chUE$(w_2)$, $w_2(x) = x^{2(n - N) + 1}e^{-x^2}$; see
 e.g.~\cite[Prop.~3.1.3]{Fo10}. Other realisations of interest are $w_2(x) = e^{-x^2}$ 
 ($w_2(x) = x^2 e^{-x^2}$) which correspond to the PDF for the positive
 eigenvalues of $2N \times 2N$ ($(2N+1) \times (2N+1)$)
 anti-symmetric Gaussian random matrix with pure imaginary entries; see \cite[Table 3.1]{Fo10}. 
 One notes too that
 the eigenvalue PDF for the orthogonal ensembles
  can also be obtained by a stereographic change of variables (\ref{3.1}) in (\ref{4.0}) for particular
  $w_2$. For example
    \begin{equation}\label{4.0x}  
    {\rm chUE}_N \Big ( {x^2 \over (1 + x^2)^{2N+1}} \Big ) \equiv O^-(2N+2), \quad
     {\rm chUE}_N \Big ( {1 \over (1 + x^2)^{2N}} \Big ) \equiv O^-(2N+1).
     \end{equation} 
 
 \subsection{Superpositions of eigenvalues and  singular values}
 For a general Hermitian matrix ensemble ME${}_N(w)$ with both positive and negative eigenvalues, we denote
 by |ME${}_N(w_2)|$ the PDF for the absolute values of the eigenvalues, or equivalently its singular values.
 
 \begin{prop} \label{P2.1} (\cite[Eq.~(2.6)]{Fo06})
 Let $w_2(x)$ be even in $x$, and set $m = \lfloor N/2 \rfloor, \hat m = \lceil N/2 \rceil$. We have
  \begin{equation}\label{5.0}
  | {\rm UE}_N(w_2) | \mathop{=}\limits^{\rm d} {\rm chUE}_{\hat{m}}(w_2) \cup {\rm chUE}_{{m}}(x^2w_2).
  \end{equation}   
 \end{prop}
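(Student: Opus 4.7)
The plan is to compute the density of $|{\rm UE}_N(w_2)|$ directly and show it agrees, up to the overall normalisation, with the superposition density prescribed on the right of (\ref{5.0}). Write each eigenvalue as $x_l = \epsilon_l y_l$ with $\epsilon_l \in \{\pm 1\}$ and $y_l \geq 0$. Since $w_2$ is even, $w_2(x_l) = w_2(y_l)$, and obtaining the joint density of $(y_1,\ldots,y_N)$ amounts to evaluating
\begin{equation*}
\Sigma(y) := \sum_{\epsilon \in \{\pm 1\}^N} \prod_{1 \le j < k \le N} (\epsilon_k y_k - \epsilon_j y_j)^2.
\end{equation*}

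The main step is to compute $\Sigma(y)$ by expanding the Vandermonde squared as a double sum over permutations. Writing $\prod_{j<k}(\epsilon_k y_k - \epsilon_j y_j) = \det[(\epsilon_j y_j)^{k-1}]_{j,k=1}^N = \sum_\sigma \mathrm{sgn}(\sigma) \prod_j \epsilon_j^{\sigma(j)-1} y_j^{\sigma(j)-1}$ and squaring gives a double sum over $(\sigma,\tau)$. Swapping with the sum over $\epsilon$ and using $\sum_\epsilon \prod_j \epsilon_j^{a_j} = 2^N$ if all $a_j$ are even and $0$ otherwise, only pairs with $\sigma(j) \equiv \tau(j) \pmod 2$ for every $j$ survive. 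Equivalently, the sets $S_\sigma := \{j : \sigma(j) \text{ odd}\}$ and $S_\tau$ must coincide, say equal $S$, and necessarily $|S| = \hat m$ (the number of odd indices in $\{1,\ldots,N\}$).

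For fixed $S$ with $|S|=\hat m$, the inner sum $\sum_{\sigma : S_\sigma = S} \mathrm{sgn}(\sigma) \prod_j y_j^{\sigma(j)-1}$ is the determinant of the matrix $M_S$ with $(j,k)$-entry $y_j^{k-1}$ whenever the parity of $k$ matches ($j \in S$ iff $k$ odd) and $0$ otherwise. Permuting rows so that indices in $S$ come before those in $S^c$, and permuting columns so that odd column indices come first, brings $M_S$ into block-diagonal form with blocks $A_{i,j} = (y_{s_i}^2)^{j-1}$ ($\hat m \times \hat m$) and $B_{i,j} = y_{t_i}(y_{t_i}^2)^{j-1}$ ($m \times m$), where $s_i, t_i$ list $S, S^c$ in increasing order. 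Hence $(\det M_S)^2$ (the two row/column reordering signs square away) equals
\begin{equation*}
\prod_{\substack{s,s' \in S \\ s<s'}} (y_{s'}^2 - y_s^2)^2 \; \prod_{t \in S^c} y_t^2 \; \prod_{\substack{t,t' \in S^c \\ t<t'}} (y_{t'}^2 - y_t^2)^2,
\end{equation*}
and $\Sigma(y) = 2^N \sum_{|S|=\hat m} (\det M_S)^2$. Multiplying by $\prod_l w_2(y_l)$ and recognising the subset sum as the symmetrisation of the joint density for ${\rm chUE}_{\hat m}(w_2) \cup {\rm chUE}_m(x^2 w_2)$ (the extra $\prod_{t \in S^c} y_t^2$ being precisely the $x^2$ built into the second weight) yields (\ref{5.0}) on matching the two symmetric densities.

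The main obstacle I would anticipate is keeping track of the sign produced by the row/column rearrangements that diagonalise $M_S$, but this is rendered harmless by the fact that we need only $(\det M_S)^2$, so every such sign squares to $+1$ and the subset-indexed summation emerges cleanly; verifying this cancellation, together with identifying the resulting block structure as exactly a Vandermonde in $y_s^2$ paired with a Vandermonde in $y_t^2$ decorated by $\prod_{t \in S^c} y_t$, is the only substantive check.
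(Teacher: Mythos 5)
Your argument is correct, but it proceeds by a different route from the paper. The paper's proof (sketched after the proposition) works at the level of test-function averages: for $a$ even it writes $\langle \prod_l a(x_l)\rangle_{{\rm UE}_N(w_2)}$ as the moment determinant $\det[\int a\,w_2\,x^{j+k}dx]_{j,k=0}^{N-1}$ via Andr\'eief's identity, observes that evenness of $a w_2$ kills all odd moments so the matrix acquires a checkerboard structure and factors into an $\hat m\times\hat m$ and an $m\times m$ determinant, and then applies Andr\'eief in reverse to recognise the two factors as the corresponding chUE averages. You instead compute the joint density of the absolute values directly, summing over the $2^N$ sign patterns, expanding the squared Vandermonde as a double permutation sum, and using the parity selection rule $\sigma(j)\equiv\tau(j)\pmod 2$ to reduce to a sum over subsets $S$ of size $\hat m$ of products of block determinants; the block $\hat m\times\hat m$ Vandermonde in $y_s^2$ and the $m\times m$ Vandermonde in $y_t^2$ decorated by $\prod_{t\in S^c}y_t$ reproduce exactly the chUE$_{\hat m}(w_2)$ and chUE$_m(x^2w_2)$ densities, and the subset sum is the symmetrised superposition density. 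All the steps check out (in particular the row/column reordering signs do disappear upon squaring, and $|S|=\hat m$ is forced since $\{1,\dots,N\}$ contains $\lceil N/2\rceil$ odd indices). What each approach buys: the paper's version is shorter and hands you the factorisation of the gap-probability generating function (\ref{2.0E}) essentially for free, since $a$ is an arbitrary even test function; yours yields an explicit pointwise algebraic identity for $\sum_{\epsilon}\Delta(\epsilon_1y_1,\dots,\epsilon_Ny_N)^2$, which is precisely the $\beta=2$ analogue of the Bornemann--La Croix identity (\ref{5.3a+}) quoted in the Remark, and makes the subset decomposition of the superposition visible at the level of densities rather than averages.
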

 
 \begin{proof}
 (Sketch) With $a(x)$ even, we first consider
  \begin{equation}\label{5.1}
  \Big \langle \prod_{l=1}^N a(x_l) \Big \rangle_{|{\rm UE}_N(w_2) |} =
 \Big \langle \prod_{l=1}^N a(x_l) \Big \rangle_{{\rm UE}_N(w_2) } \propto \det \bigg [
 \int_{-\infty}^\infty a(x) w_2(x) x^{j+k} \, dx \bigg ]_{j,k=0}^{N-1},
   \end{equation} 
 where the RHS follows from Andr\'eief's identity \cite{Fo19}. The assumption on both 
 $a(x), w(x)$ being even allows the RHS to be factored into two determinants. Using
 Andr\'eief's identity in the reverse direction on each of the latter identifies the RHS as
  \begin{equation}\label{5.2} 
   \Big \langle \prod_{l=1}^m a(x_l) \Big \rangle_{|{\rm chUE}_m(w_2) |} 
     \Big \langle \prod_{l=1}^{\hat{m}} a(x_l) \Big \rangle_{|{\rm chUE}_{\hat{m}}(w_2) |}. 
   \end{equation} 
   Since for $x>0$, $a(x)$ is arbitrary, the identity implied by (\ref{5.1}) and (\ref{5.2})   is
   equivalent to (\ref{5.0}).
  \end{proof}

  In the case $w_2(x) = e^{- x^2}$, by use of the theory recalled below (\ref{4.0a}), the identity
  (\ref{5.0}) relates the GUE to what may be abbreviated as the aGUE (anti-symmetric GUE); see also
  \cite{EL15}. An equivalent statement of (\ref{5.0}) in this case, obtained by a change of variables on the RHS,  is that
 \begin{equation}\label{5.0x}
 | {\rm GUE}_N | \mathop{=}\limits^{\rm d} \widetilde{{\rm LUE}}_{{\hat{m},-1/2}}   \cup   
 \widetilde{ {\rm LUE}}_{{{m},1/2}} ,
  \end{equation}  
 where LUE${}_{N,\alpha}$ --- in words the Laguerre unitary ensemble with $N$ eigenvalues and Laguerre
 parameter $\alpha$ --- refers to (\ref{3.0}) with $w_2(x) = x^\alpha e^{-x} \mathbbm 1_{x > 0}$, while
 $\widetilde{ {\rm LUE}}_{N,a}$ refers to the change of variables $x_l \mapsto x_l^2$ in the latter.
 Another viewpoint of the identity (\ref{5.0x}) is given in \cite{Du18}.
 
 With $E_{N,2}(k;J;w_2)$ ($\tilde{E}_{N,2}(k;J;w_2)$) denoting the probability that there are 
 exactly $k$ eigenvalues (singular values) in the interval $J$ for the ensemble UE${}_N(w_2)$,
 and with $\mathcal E_{N,2}^{ {\rm UE}_N(w_2) }(J;\xi)$ ($\mathcal E_{N,2}^{ |{\rm UE}_N(w_2) |}(J;\xi)$)
 denoting the corresponding generating functions (recall the first line of (\ref{2.0d})), it is immediate
 from (\ref{5.0}) that \cite[Eq.~(2.39) with $a=0$]{Fo10}
   \begin{equation}\label{2.0E}
 {\mathcal E}_{N,2}^{|{\rm GUE}_N|}((0,s);\xi)  =   {\mathcal E}_{N,2}^{{\rm LUE}_{\hat{m},-1/2}}((0,s^2);\xi)  {\mathcal E}_{N,2}^{{\rm LUE}_{{m},1/2}}((0,s^2);\xi). 
 \end{equation} 
 A consequence of (\ref{5.0x}) of a different type identified in \cite[Th.~2]{EL15} is a formula
 in distribution for the absolute value of the determinant of a matrix $H$ from GUE${}_N$. Thus with $\chi_k$ the
 square root of the classical $\chi^2$ distribution with $k$ degrees of freedom, one has
 \begin{equation}\label{5.0y} 
 | \det  H |  \mathop{=}\limits^{\rm d}   \prod_{j=1}^N \chi_{2 \lfloor j/2 \rfloor + 1}.
 \end{equation}   

In distinction to (\ref{3.0c}),
the inter-relation (\ref{5.0})    does not involve an auxiliary operation such as alt. We present
now inter-relations involving superposition which do involve an auxiliary operation, 
specifically that of even$(\cdot)$ defined below (\ref{3.1a}) \cite{FR01}.

\begin{prop}\label{P2.2} 
For $a,b > -1$, ${\rm Re} (\alpha) > -1$ define the pairs of weights
 \begin{multline}\label{5.0z}  
 (w_1(x), w_2(x)) =  \\
 \begin{cases} (e^{-x^2/2}, e^{-x^2}) & {\rm Gaussian} \\
 (x^{(a-1)/2} e^{-x/2} \mathbbm 1_{x>0}, x^a e^{-x}  \mathbbm 1_{x>0}) & {\rm Laguerre} \\
 ((1+x)^{(a-1)/2} (1-x)^{(b-1)/2}  \mathbbm 1_{-1<x<1},
(1+x)^{a} (1-x)^{b}  \mathbbm 1_{-1<x<1} ) & {\rm Jacobi} \\
((1+ix)^{-(N+\alpha+1)/2} (1-ix)^{-(N+\bar{\alpha}+1)/2},
(1+ix)^{-(N+\alpha)} (1-ix)^{-(N+\bar{\alpha})})
  & {\rm Cauchy.} 
\end{cases}
\end{multline}
In each of these cases the inter-relation
 \begin{equation}\label{5.3} 
 {\rm even} \, ( {\rm OE}_N(w_1) \cup {\rm OE}_{N+1}(w_1) ) \mathop{=}\limits^{\rm d} {\rm UE}_N(w_2)
 \end{equation} 
 holds true. Moreover, up to linear fractional change of variables, these pairs of weights uniquely
 satisfy (\ref{5.3}).
\end{prop}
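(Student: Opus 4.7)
The plan is to compute the marginal density of the even-labelled eigenvalues of ${\rm OE}_N(w_1)\cup{\rm OE}_{N+1}(w_1)$ by integrating out the odd-labelled ones, and to identify the result, case-by-case, with the ${\rm UE}_N(w_2)$ density $\prod_i w_2(x_i)\prod_{i<j}(x_j-x_i)^2$.

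I would write the sorted superposition as $y_1<\cdots<y_{2N+1}$ and split it into $x_i:=y_{2i}$ and $z_j:=y_{2j-1}$, so that $z_1<x_1<z_2<\cdots<x_N<z_{N+1}$. Summing the product of the two independent OE densities over the $\binom{2N+1}{N}$ assignments of $y$'s to the two ensembles, a de~Bruijn/Pfaffian summation of the $\beta=1$ type (the mechanism underlying Gunson's proof of (\ref{2.0c}) and the identity (\ref{3.0c})) collapses the sum into a joint density proportional to $\prod_i w_1(y_i)\prod_{i<j}(y_j-y_i)$ on the sorted union. Splitting the global Vandermonde as $\prod_{i<i'}(x_{i'}-x_i)\prod_{j<j'}(z_{j'}-z_j)\prod_{i,j}|z_j-x_i|$ and integrating the $z$'s over the interlacing region reduces the task to a Dixon--Anderson/Selberg-type integral in which the $x_i$ play the role of prescribed interlacing points. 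For each of the four weight families in (\ref{5.0z}) this integral admits a closed-form evaluation of the shape $C\prod_{i<i'}(x_{i'}-x_i)\prod_i g(x_i)$ for an explicit single-variable function $g$; combined with the prefactor $\prod_i w_1(x_i)\prod_{i<i'}(x_{i'}-x_i)$ this produces precisely $\prod_i w_2(x_i)\prod_{i<i'}(x_{i'}-x_i)^2$, establishing (\ref{5.3}).

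For uniqueness up to a linear fractional change of variables, I would reverse the argument. Specialising (\ref{5.3}) to $N=1$ (or even to the one-point density at larger $N$) gives a one-dimensional functional equation relating $w_2$ to an antiderivative of $w_1$; logarithmic differentiation then reduces this to a Pearson-type first-order ODE $(\sigma w_2)'=\tau w_2$ with $\deg\sigma\le 2$ and $\deg\tau\le 1$, whose solutions are exactly the Hermite, Laguerre, Jacobi and Bessel/Cauchy families of (\ref{5.0z}). The Möbius action on $x$ that preserves Pearson's class accounts precisely for the linear fractional ambiguity in the statement.

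The principal obstacle is the closed-form evaluation of the Dixon--Anderson integral with the weight factors $\prod_j w_1(z_j)$ inserted: for a generic $w_1$ no such closed form exists, and only the four classical families force the integral to have the required multiplicative product structure in the $x_i$. Via the uniqueness argument, this algebraic miracle is equivalent to the classical characterisation of the Hermite/Laguerre/Jacobi/Bessel weights as the solutions of Pearson's ODE, so the computational and structural difficulties of the two halves of the proof are in fact two faces of the same coin.
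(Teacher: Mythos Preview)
Your overall strategy is the paper's: a Gunson-type combinatorial identity to separate the superposition density into ``odd'' and ``even'' parts, followed by a Dixon--Anderson integration over the odd-labelled variables. But you have mis-stated the output of the Gunson step. The sum over the $\binom{2N+1}{N}$ assignments does \emph{not} collapse to the full Vandermonde $\prod_{i<j}(y_j-y_i)$. The relevant identity (equation~(\ref{5.3a})) is
\[
\sum_{\substack{S\subset\{1,\dots,2N+1\}\\|S|=N}}\Delta(y_S)\,\Delta(y_{\{1,\dots,2N+1\}\setminus S})
=2^N\,\Delta(y_{\rm odd})\,\Delta(y_{\rm even}),
\]
the product of the two \emph{separate} Vandermondes, with no cross factor $\prod_{i,j}|z_j-x_i|$. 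Already for $2N+1=3$ the left side is $2(y_1-y_3)$, not $(y_1-y_2)(y_1-y_3)(y_2-y_3)$. If one carries your version through the Jacobi case, the Dixon--Anderson integrand acquires exponent $s_p=2$ (rather than $s_p=1$) at each interior node $x_i$, and the evaluation (\ref{DA}) then returns $\prod_{i<i'}|x_i-x_{i'}|^{3}$; combined with your retained prefactor $\Delta(x)$ this yields $\Delta(x)^4$ and a shifted weight --- a $\beta=4$ ensemble, not ${\rm UE}_N(w_2)$. With the correct identity the cross factor is absent, the interior exponents are $s_p=1$, and (\ref{DA}) contributes exactly the missing $\Delta(x)$ and weight adjustment to reach $\prod_i w_2(x_i)\,\Delta(x)^2$.

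Your uniqueness sketch via a Pearson-type ODE is in the right spirit; the paper defers to \cite{FR01}, where the characterisation proceeds directly from the functional equation forced by requiring the interlacing integral to factor as $\prod_i g(x_i)\,\Delta(x)$ for every $N$.
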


\begin{proof} (Comments only) 
Fundamental to the superimposed ensemble $ {\rm OE}_N(w_1) \cup {\rm OE}_{N+1}(w_1)$ is an
identity which exhibits that the dependence on the even and odd numbered ordered eigenvalues
factorises. Thus with $S = \{ s_1, s_2,\dots, s_l \}$, $s_1 > s_2 \cdots > s_l \ge 1$ a set of positive
integers and 
 \begin{equation}\label{5.3'} 
\Delta(x_S) := \prod_{1 \le j < k \le l} (x_{s_j} - x_{s_k} ),
 \end{equation} 
one has \cite{Gu62,FR01}
 \begin{equation}\label{5.3a} 
\sum_{S \subset \{1,\dots,2N+1\} \atop |S| = N}
\Delta(x_S) \Delta(x_{\{1,\dots,2N+1\}-S})  = 
2^N \Delta(x_{\{1,3,\dots,2N+1\}}) \Delta(x_{\{2,4,\dots,2N\}}).
 \end{equation} 
 The RHS now has to be weighted by
  $ \prod_{l=1}^{2N+1} w_1(x_l)$ and  integration performed over the odd labelled
  eigenvalues, taking into consideration that  the domain of integration  interlaces the odd with
  the even
  labelled eigenvalues (and in the case of $x_1$ and $x_{2N+1}$,  the endpoints of support of
  $w_1$). For the Jacobi weight this amounts to a special case of the Dixon-Anderson integral
  \cite[Eq.~(4.15)]{Fo10} (cf.~\cite[Eq.~(6)]{Di05})
\begin{multline}\label{DA}
\int_X d \lambda_1 \cdots
d\lambda_N \, \prod_{1 \le j < k \le N} ( \lambda_j
- \lambda_k ) \prod_{j=1}^N \prod_{p=1}^{N+1} | \lambda_j - a_p|^{s_p - 1}
\\
= {\prod_{i=1}^{N+1} \Gamma ( s_i) \over \Gamma(\sum_{i=1}^{N+1} s_i )}
\prod_{1 \le j < k \le N +1} | a_k - a_j|^{s_j + s_k - 1},
\end{multline}
where $X$ is the region
 \begin{equation}\label{DA1}  
a_1 > \lambda_1 >  a_2 > \cdots > \lambda_N > a_{N+1}.
   \end{equation} 
 \end{proof}
 
 \begin{remark}
 With the notation $\Delta(u_1,\dots,u_N) := \prod_{1 \le j < k \le N} (u_j - u_k)$ (cf.~(\ref{5.3'})),
  $m, \hat{m}$ as in Proposition \ref{P2.1}, and $x_j = \sigma_{2j-1}$, $y_j=\sigma_{2j}$ $(j=1,2,\dots)$,
  where $\sigma_1 > \sigma_2 > \cdots > \sigma_N \ge 0$,
 an analogue of (\ref{5.3a}) is the identity \cite[Eq.~(11)]{BL15}
  \begin{equation}\label{5.3a+} 
  \sum_{\epsilon_1,\dots,\epsilon_N = \pm 1} | \Delta(\epsilon_1 \sigma_1,\dots,\epsilon_N \sigma_N) |
  = 2^N \Delta(x_1^2,\dots,x_{\hat{m}}^2) y_1 \cdots y_m
   \Delta(y_1^2,\dots,y_{{m}}^2). 
  \end{equation}
  In the case that the weight $w_1(x)$ is even in $x$, one identifies the LHS of (\ref{5.3a+}),
  after multiplication by $\prod_{l=1}^N w(\lambda_l)$,
   with
  $| {\rm OE}_N(w_1)|$. Further, if such a $w_1$ should be chosen from (\ref{5.0z})
  (and is thus a Gaussian, Jacobi with $a=b$, or Cauchy with $\alpha = \bar{\alpha}$ weight),
  it has been shown in \cite[Th.~1]{BF16} that the $\{x_j\}$ can be integrated over to deduce
  \begin{equation}\label{5.3a+1} 
  {\rm even} \, | {\rm OE}_N(w_1)|  \mathop{=}\limits^{\rm d} {\rm chUE}_m(x^{2 \mu} w_2).
   \end{equation}
   (In the Jacobi case the required integration follows by a suitable change of variables and
   specialisation of the parameters in (\ref{DA}).)
   Here $\mu$ takes on the value $0,1$ according to the requirement that $N = 2m + \mu$.
   Moreover, combining this inter-relation with (\ref{5.0}) gives that for these same even
   weights \cite[Eq.~(6)]{BF16}
  \begin{equation}\label{5.3a+2} 
| {\rm UE}(w_2) |   \mathop{=}\limits^{\rm d} {\rm even} \, | {\rm OE}_N(w_1)| \cup
 {\rm even} \, | {\rm OE}_N(w_1)| 
  \end{equation}
(cf.~(\ref{5.3})).
 \end{remark}
 
 An immediate combinatorial consequence of (\ref{5.3}) for the conditioned gap probabilities (in an obvious
 notation) is the inter-relation \cite[Eq.~(17)]{BF16}
  \begin{equation}\label{5.3b} 
  E_{N,2}(k;J_s;w_2) =
  \sum_{j=0}^{2k+1}  E_{N,1}(2k+1-j;J_s;w_1) \Big ( 
  E_{N+1,1}(j;J_s;w_1) + 
  E_{N+1,1}(j-1;J_s;w_1) \Big ).
 \end{equation} 
 Here $J_s$ is a single interval involving one endpoint of the support of the weight.  It is noted in
 \cite[Remark above Th.~6]{BF16} that, due to the effect of the freezing of a single eigenvalue
 in the superimposed ensemble, taking $a \to - 1^+$ in the Laguerre or Jacobi cases  of 
 Proposition \ref{P2.2} gives that for the pair of weights
  \begin{equation}\label{5.3c}   
 (w_1(x), w_2(x)) =  
 \begin{cases}
 ( e^{-x/2} \mathbbm 1_{x>0},  e^{-x}  \mathbbm 1_{x>0}) & {\rm Laguerre} \\
 ( (1-x)^{(b-1)/2}  \mathbbm 1_{-1<x<1},
 (1-x)^{b}  \mathbbm 1_{-1<x<1} ) & {\rm Jacobi} 
\end{cases}
\end{equation} 
 one has the inter-relation
 \begin{equation}\label{5.3x} 
 {\rm even} \, ( {\rm OE}_N(w_1) \cup {\rm OE}_{N}(w_1) ) \mathop{=}\limits^{\rm d} {\rm UE}_N(w_2),
 \end{equation} 
 derived independently in \cite{FR01}. This reasoning also implies that taking the limit $\alpha \to -1^+$
 in the Cauchy case of (\ref{5.3}) reclaims (\ref{3.0c}). 
 
 Reversing the sign of $x$ in the Jacobi case of (\ref{5.3c}) gives
\begin{equation}\label{2.26a} 
 {\rm odd} \, ( {\rm OE}_N(w_1) \cup {\rm OE}_{N}(w_1) ) \mathop{=}\limits^{\rm d} {\rm UE}_N(w_2),
 \end{equation} 
 for 
 $$(w_1(x), w_2(x)) = ( (1+x)^{(b-1)/2}  \mathbbm 1_{-1<x<1},
 (1+x)^{b}  \mathbbm 1_{-1<x<1} ).$$ 
 Also, one has
 \begin{equation}\label{2.26b} 
 {\rm odd} \, ( {\rm OE}_{N-1}(w_1) \cup {\rm OE}_{N}(w_1) ) \mathop{=}\limits^{\rm d} {\rm UE}_N(w_2),
 \end{equation} 
 for the particular Jacobi weights 
 $$(w_1(x),w_2(x)) =  (   \mathbbm 1_{-1<x<1},
   \mathbbm 1_{-1<x<1} ).$$
   This corresponds to the Dixon-Anderson integral (\ref{DA}) with each $s_p=1$.
    As for (\ref{5.3}), the above given pairs of weights for
   which (\ref{5.3x}), (\ref{2.26a}) and (\ref{2.26b}) hold were shown in
   \cite{FR01} to be unique up to fractional linear transformation. We remark in
   the (abstracted) setting that the eigenvalues are restricted to the integer lattice,
   the identity (\ref{2.26b}) is known in the theory of random tilings \cite{FFN10},
    \cite[proof of Prop.~10.2.3]{Fo10}; see too \cite[\S 5]{MOW09}.

 \subsection{Alternative constructions of interlaced ensembles}\label{S2.3}
 The identity (\ref{5.3a}) reveals that superimposing two real symmetric ensembles
 with orthogonal symmetry gives rise, up to the interlacing constraint, to a separation of the
 interaction between the odd and  even labelled eigenvalues. Following \cite{FR02b},
 here we will outline some different, matrix theoretic, constructions of this effect and show
 too how they imply inter-relation identities upon a suitable  decimation operation.
 
 We will consider first the Gaussian case of (\ref{5.3}). Let $A$ be an $N \times N$ GUE
 matrix. Now double the size of $A$ by representing each (typically complex) scalar as a $2 \times 2$ real matrix
  \begin{equation}\label{Q} 
 x + i y \mapsto \begin{bmatrix} x & y \\ - y & x \end{bmatrix}.
  \end{equation} 
 With the resulting matrix denoted $\tilde{A}$, one notes that the eigenvalues of $\tilde{A}$ are
 the eigenvalues of $A$, each with multiplicity two. Next introduce the $(2N+1) \times (2N+1)$
 bordered matrix
 \begin{equation}\label{7.3y}  
 M = \begin{bmatrix} \tilde{A} & \mathbf x \\
 \mathbf x^\dagger & a \end{bmatrix},
 \end{equation}
 where $\mathbf x$ is an $N \times 1$ column vector with entries $\sqrt{b}$ times a standard
 real Gaussian, and $a$ is $\sqrt{2b}$ times a standard real Gaussian. 
 Due to the entries being Gaussians, the matrix $\tilde{A}$ can be brought to its diagonal form
 without altering the distribution of the bordered entries. The characteristic polynomial of this
 simplified matrix is easy to compute, telling us that the matrix $M$ has  the $N$ eigenvalues of
 $A$, $\{y_j\}_{j=1}^{N}$ say, and $N+1$ eigenvalues given by the zeros of the rational
 function
 \begin{equation}\label{R1}
 R_1(\lambda) := \lambda - a + b \sum_{j=1}^N {w_j \over y_j - \lambda},
   \end{equation}
 where each $w_j$ is distributed as an independent gamma random variable with
 scale $\theta = 1$ and rate $k=1$. Up to a scaling of variables, the PDF for the roots
 of $R_1(\lambda)$ is known from \cite[Cor.~4]{FR02b}, allowing for the explicit functional
 form of eigenvalue PDF of $M$ to be computed.
 
 \begin{prop}\label{P2.3} (\cite[\S 6.1]{FR02b}
 Let the eigenvalues of the matrix $M$ in (\ref{7.3y}) be denoted $\{y_j\}_{j=1}^N \cup
 \{x_j\}_{j=1}^{N+1}$ ordered so that
 \begin{equation}\label{7.3y1}   
 x_1 > y_1 > \cdots > y_N > x_{N+1}.
 \end{equation}
 We have that the eigenvalue PDF of $M$ is proportional to
 \begin{equation}\label{7.3y2} 
 \prod_{l=1}^{N+1} e^{- c_1 x_l^2/2} \prod_{1 \le j < k \le N + 1} (x_j - x_k)
   \prod_{l=1}^{N} e^{- c_2 y_l^2/2}  \prod_{1 \le j < k \le N } (y_j - y_k),
   \end{equation}
where    $c_1 = {1 \over 2b}$ and $c_2 = - {1 \over 2 b} + 2$. Furthermore
 \begin{equation}\label{7.3y3} 
 {\rm even} \, (M ) \mathop{=}\limits^{\rm d} {\rm GUE}_N.
  \end{equation}
  \end{prop}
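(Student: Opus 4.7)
My plan is to follow the matrix-theoretic strategy sketched in the paragraphs preceding the statement: first reduce the eigenvalue problem for $M$ to that of a diagonal matrix bordered by an effectively one-dimensional Gaussian coupling per eigenvalue, then apply the cited root distribution for $R_1(\lambda)$ to obtain the joint PDF, and finally read off the decimation claim as an immediate consequence of the construction.

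For the first stage, I would conjugate $M$ by $\mathrm{diag}(Q,1)$, where $Q\in O(2N)$ diagonalises $\tilde{A}$; this leaves the distribution of the bordering vector $\mathbf{x}$ invariant by real orthogonal invariance of its i.i.d.\ Gaussian entries. Since $\tilde{A}$ after rotation has each $y_j$ as a doubly-degenerate eigenvalue, a further $SO(2)$ rotation inside each two-dimensional eigenspace converts the pair of coupled bordering entries into a single scalar of magnitude $\sqrt{b\,w_j}$ with $w_j$ a $\Gamma(1,1)$ variable independent across $j$, while the orthogonal direction decouples from the border and keeps $y_j$ as an eigenvalue of $M$. The remaining $N+1$ eigenvalues are then precisely the zeros of the secular equation $R_1(\lambda)=0$ in (\ref{R1}), and the interlacing (\ref{7.3y1}) follows from the standard sign-count argument for rational functions of this form.

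For the second stage I would invoke \cite[Cor.~4]{FR02b}, which gives the conditional joint PDF of $(x_1,\dots,x_{N+1})$ given $(y_1,\dots,y_N)$ as the Jacobian of the implicit map $(a,w_1,\dots,w_N)\mapsto(x_1,\dots,x_{N+1})$ defined by the roots of $R_1$, times the Gaussian density for $a$ and the gamma densities for the $w_j$'s. This produces the factor $\prod_l e^{-c_1 x_l^2/2}\prod_{j<k}(x_j-x_k)$, with $c_1=1/(2b)$ inherited from the variance of $a$, together with an exponential factor in the $y_l$'s arising from the residues of $R_1$ at $\lambda=y_j$. Multiplying by the GUE marginal $\prod_l e^{-y_l^2}\prod_{j<k}(y_j-y_k)^2$ of $A$ and collecting exponents yields (\ref{7.3y2}) with $c_2=2-1/(2b)$. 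The decimation statement (\ref{7.3y3}) is then immediate: by construction $\{y_j\}_{j=1}^N$ is the spectrum of the GUE matrix $A$, and (\ref{7.3y1}) places these eigenvalues at positions $2,4,\ldots,2N$ in the ordered list of $2N+1$ eigenvalues of $M$, so the even decimation reproduces $\mathrm{GUE}_N$ by definition. The principal obstacle is the bookkeeping in Stage~2 --- extracting the Jacobian of the $R_1$-parametrisation and integrating out the gamma weights $\{w_j\}$ so that the $y_l$-dependent contribution has precisely the Gaussian form $e^{y_l^2/(2b)}$ required to combine with the GUE weight into $e^{-c_2 y_l^2/2}$.
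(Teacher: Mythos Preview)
Your proposal is correct and follows essentially the same route as the paper. The paper's argument (given in the paragraphs surrounding the proposition rather than in a formal proof block) is exactly the three-step plan you describe: orthogonally diagonalise $\tilde{A}$ so that the bordering vector retains its i.i.d.\ Gaussian law, identify the spectrum of $M$ as $\{y_j\}$ together with the $N+1$ zeros of $R_1(\lambda)$, then import the root distribution of $R_1$ from \cite[Cor.~4]{FR02b} and multiply by the GUE marginal for $\{y_j\}$; the decimation statement (\ref{7.3y3}) is noted to be immediate from the construction and independent of $b$. Your added remark about the $SO(2)$ rotation within each two-dimensional eigenspace is a helpful gloss on why each $y_j$ survives as an eigenvalue of $M$ and why the effective coupling strengths $w_j$ are independent gamma variables, but it is an elaboration of, not a departure from, the paper's reasoning.
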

 
 The decimation result follows immediately from the property of $M$ noted below (\ref{7.3y}), and is
 independent of the scales $c_1,c_2$ in (\ref{7.3y2}). Making the particular choice $b={1 \over 2}$ gives
 for the latter $c_1 = c_2 = 1$, and we see from  (\ref{5.3a}) that (\ref{7.3y2}) is identical to the
 eigenvalue PDF of GUE${}_N \cup {\rm GUE}_{N+1}$. Thus the above construction provides a different
 viewpoint on the validity of (\ref{5.3x}) in the Gaussian case.
 
 It's similarly possible to give a matrix theoretic derivation of  (\ref{5.3x}) in both the Laguerre and Jacobi cases,
 upon knowledge that the eigenvalue PDF of $ {\rm even} \, ( {\rm OE}_N(w_1) \cup {\rm OE}_{N}(w_1) ) $, 
 up to the interlacing constraint, separates  the
 interaction between the odd and  even labelled eigenvalues \cite{FR02b}. For definiteness, we consider
 the Laguerre case. For $X$ an $n \times N$ $(n \ge N)$ complex Gaussian matrix, the positive definite matrix
 $A= X^\dagger X$ is referred to as a complex Wishart matrix, and such matrices are a realisation of the Laguerre
 unitary ensemble LUE${}_{N,\alpha}$, with Laguerre parameter $\alpha = n - N$. Now append $b > 0$ times a row of complex
 Gaussians to $X$ to form the matrix $Y$, and consider the matrix $B = Y^\dagger Y = A + b \mathbf x^\dagger \mathbf x$,
 where $\mathbf x$ is   an $N \times 1$ column vector with entries independent standard
 complex Gaussians. From the second equality, and the fact the entries are all Gaussians, the matrix $A$ can be brought to
 diagonal form (eigenvalues $a_1 > a_2 > \cdots > a_N > 0$ say) without altering the distribution of $\mathbf x$. Consideration
 of the characteristic polynomial shows that the eigenvalues  of $B$ are given by the zeros of the random rational function
 \begin{equation}
 R_2(\lambda) = 1 + b \sum_{l=1}^N  {w_j \over a_j - \lambda },
   \end{equation}
   where each $w_j$ is distributed as in (\ref{R1}). The PDF for the roots of $R_2$ is
   given by a specialisation of the parameters in \cite[Cor.~3]{FR02b}, which provides for the explicit
   functional form of 
 the joint  eigenvalue
 PDF of $A$ and $B$.
 
  \begin{prop} (\cite[consequence of Th.~4]{FR02b}
  In the above setting, and with $\{b_j\}_{j=1}^N$ the eigenvalues of $B$ (suitably ordered),
  the joint eigenvalue PDF of $A$ and $B$ is proportional to
   \begin{equation}\label{7.4}
 \prod_{l=1}^N a_l^\alpha e^{- a_l} e^{-(1/b) (b_l - a_l)}
 \prod_{1 \le j < k \le N} (a_j - a_k) (b_j - b_k),
  \end{equation}
 subject to the interlacing
    \begin{equation}\label{7.4a}
 b_1 > a_1 > \cdots > b_N > a_N.
 \end{equation} 
 Furthermore, from the interlacing and knowledge of the joint eigenvalue PDF of $\{a_j\}$, 
   \begin{equation}\label{7.4b}
   {\rm even} \, (B)  \mathop{=}\limits^{\rm d} {\rm LUE}_{N,\alpha}.
  \end{equation}   
\end{prop}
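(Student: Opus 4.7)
The plan is to derive the joint PDF (\ref{7.4}) by conditioning on the spectrum of $A$ and computing the conditional distribution of the spectrum of $B$ from the secular equation produced by the rank-one update $B = A + b\mathbf{x}^\dagger \mathbf{x}$, and then to read off (\ref{7.4b}) directly from the interlacing labels.

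First, by definition of the complex Wishart construction, the marginal eigenvalue PDF of $A$ is the LUE${}_{N,\alpha}$ density, proportional to $\prod_l a_l^\alpha e^{-a_l} \prod_{j<k}(a_j-a_k)^2$. Since $\mathbf{x}$ is a standard complex Gaussian row vector and hence unitarily invariant, one may replace $A$ by its diagonalisation $U^\dagger \mathrm{diag}(a_1,\ldots,a_N) U$ without altering the distribution of $\mathbf{x}$. In this eigenbasis the matrix determinant lemma yields $\det(B - \lambda I) = \prod_j(a_j - \lambda)\cdot R_2(\lambda)$, with $R_2$ as in the excerpt and $w_j = |x_j|^2$ iid unit-rate exponential (i.e.\ Gamma$(1,1)$). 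The eigenvalues $\{b_j\}$ of $B$ are therefore the $N$ real roots of $R_2$, and the positivity of the rank-one update combined with the standard interlacing for rational functions with positive residues immediately yields (\ref{7.4a}).

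Next, I would carry out the change of variables $(w_1,\ldots,w_N)\mapsto(b_1,\ldots,b_N)$. Clearing denominators in $R_2(\lambda)=0$ gives the polynomial identity $\prod_j(\lambda-b_j) = \prod_j(\lambda-a_j) - b\sum_j w_j \prod_{k\neq j}(\lambda - a_k)$. Evaluating at $\lambda = a_i$ produces $bw_i = \prod_j(b_j-a_i)/\prod_{k\neq i}(a_k-a_i)$, and matching coefficients of $\lambda^{N-1}$ produces $b\sum_j w_j = \sum_j(a_j - b_j)$. A Vandermonde manipulation then shows that the Jacobian of this map equals, up to a factor $b^{-N}$ and a sign, $\prod_{j<k}(b_j-b_k)/\prod_{j<k}(a_j-a_k)$; this is the content of \cite[Cor.~3]{FR02b} specialised to our parameters. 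Pushing the iid exponential density $\prod_j e^{-w_j}$ through this Jacobian and multiplying by the Laguerre marginal of $\{a_j\}$ yields, after cancelling one copy of the Vandermonde in $\{a_j\}$, precisely the expression (\ref{7.4}) on the interlacing region.

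For the decimation (\ref{7.4b}), note that the interlacing (\ref{7.4a}) orders the $2N$ eigenvalues of the combined ensemble $A\cup B$ as $b_1 > a_1 > b_2 > a_2 > \cdots > b_N > a_N$, so that the even-labelled entries are exactly $a_1,\ldots,a_N$. By construction these are the eigenvalues of the complex Wishart matrix $X^\dagger X$, hence distributed as LUE${}_{N,\alpha}$. The main obstacle, and essentially the only new computation, is the Jacobian evaluation in the $w\to b$ step; once this (the content of \cite[Cor.~3]{FR02b}) is in hand, the remainder of the argument is bookkeeping.
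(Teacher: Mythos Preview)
Your proposal is correct and follows essentially the same route as the paper, which simply cites \cite[Cor.~3]{FR02b} for the conditional PDF of the roots of $R_2$ while you spell out the change of variables and Jacobian explicitly. One harmless slip: matching $\lambda^{N-1}$ coefficients gives $b\sum_j w_j = \sum_j(b_j - a_j)$ rather than $\sum_j(a_j - b_j)$, but your subsequent exponential factor $e^{-(1/b)(b_l-a_l)}$ is already consistent with the correct sign.
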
 

We remark that the decimation result (\ref{7.4b}), which one notes holds for all $b > 0$, was
first obtained indirectly in \cite{BR01a}.
In the case $\alpha = 0$, $b = 2$, up to a simple scaling,
one recognises (\ref{7.4}) as the functional form of the 
eigenvalue PDF of the superimposed ensemble
${\rm LUE}_{N,0} \cup {\rm LUE}_{N,0}$.  Hence (\ref{7.4b}) reclaims the Laguerre case
of  (\ref{5.3x}).

 \section{Inter-relations involving decimation}
 \subsection{Singular values and the generating function $\mathcal  E^{{\rm COE}_N}((0,\phi);\xi) $}
 We have already seen a number of inter-relations between random matrix
 ensembles which involve the decimation operations of alt, even or odd.
 However, the focus in the previous section was in the circumstance that
 these were combined with the superposition operation. The identity (\ref{5.3a+1})
 is an example of an inter-relation involving only a decimation. The corresponding
 gap probability identity is simpler than (\ref{5.3b}). It reads \cite[Th.~4]{BF16}
  \begin{equation}\label{5.3z} 
E_{N,1}(2k+\mu - 1; (-s,s);w_1) +   E_{N,1}(2k+\mu ; (-s,s);w_1) =
E_{N,2}(k; (0,s^2);x^{\mu - {1 \over 2}} w_2(x^{1/2}) \mathbbm 1_{x > 0}).
 \end{equation}  
There is particular interest in (\ref{5.3a+1}) for the Cauchy weight with $\alpha = 0$,
which upon the stereographic projection (\ref{3.1}), OE${}_N(w_1)$ becomes COE${}_N$.
The meaning of $|{\rm COE}_N|$ is the point process on the half circle $0 < \theta < \pi$
formed by the eigen-angles of COE${}_N$ in this range, union the negative of the eigen-angles
in the range  $-\pi < \theta < 0$. Thus, upon making use too of  (\ref{4.0x}), one has \cite[Eq.~(61),  after correction of a sign on RHS]{BF16}
\begin{equation}\label{2.27}
{\rm even} ({\rm COE}_{N}) = O^{-}(N+1)
  \end{equation}
and consequently
\cite[Eq.~(3.25)  for $N$ even case]{Fo06}, \cite[Th.~6, , after correction of a sign on RHS]{BF16}
\begin{equation}
E_{N,1}(2k+\mu-1;(-\theta,\theta);{\rm COE}_{N}) + E_{N,1}(2k+\mu;(-\theta,\theta);{\rm COE}_{N})  =
 E_{\lfloor N/2 \rfloor ,2}(k;(0,\theta); O^-(N+1) ). 
  \end{equation}
  In terms of generating functions, upon introducing
\begin{equation}\label{2.29}
\mathcal E^{O^-(N+1)}((0,\theta);\xi) := \sum_{k=0}^\infty   (1 - \xi)^k E_{\lfloor N/2 \rfloor ,2}(k;(0,\theta); O^-(N+1) ),
 \end{equation}
and recalling (\ref{2.0d}) gives that \cite[Eq.~(3.26) for first equation]{Fo06}, 
\begin{equation}\label{2.30}
\mathcal E^{-}_{2N,1}((-\theta,\theta);\xi)  = \mathcal E^{O^-(2N+1)}((0,\theta);\xi), \quad \mathcal E^{+}_{2N-1,1}((-\theta,\theta);\xi)  = \mathcal E^{O^-(2N)}((0,\theta);\xi).
 \end{equation}    
 
 The identity (\ref{5.3a+1}) in the limit $\alpha \to -1^+$ of the Cauchy case, due to the freezing
 of an eigenvalue effect (recall text above (\ref{5.3c})), transforms  (\ref{5.3a+1}) to read
 \begin{equation}\label{5.3a+1a} 
  {\rm odd} \, | {\rm OE}_N(w_1)|  \mathop{=}\limits^{\rm d} {\rm chUE}_{\hat{m}}(x^{2 \hat{\mu}} w_2),
   \end{equation}
   where $\hat{\mu}$ is such that $N+1=\hat{m} + \hat{\mu}$ and with $(w_1,w_2)$ as in (\ref{3.0c}).
   This gives for the companion of (\ref{2.27}) \cite[Eq.~(62), after correction of a sign on RHS]{BF16}
  \begin{equation}\label{2.27+}
{\rm odd} ({\rm COE}_{N}) = O^{+}(N+1)
  \end{equation}
and the corresponding gap probability inter-relation 
 \cite[Th.~6,  after correction of a sign on RHS]{BF16}
\begin{equation}
E_{N,1}(2k+1-\mu;(-\theta,\theta);{\rm COE}_{N}) + E_{N,1}(2k-\mu;(-\theta,\theta);{\rm COE}_{N})  =
 E_{\lceil N/2 \rceil ,2}(k;(0,\theta); O^+(N+1) ). 
  \end{equation} 
In a notation analogous to (\ref{2.29}) we thus have the companion to (\ref{2.30})
\cite[Eq.~(8.149) for first equation]{Fo10},
\begin{equation}\label{2.30a}
\mathcal E^{+}_{2N,1}((-\theta,\theta);\xi)  = \mathcal E^{O^+(2N+1)}((0,\theta);\xi), \quad 
 \mathcal E^{-}_{2N-1,1}((-\theta,\theta);\xi)  = \mathcal E^{O^+(2N)}((0,\theta);\xi).
 \end{equation} 
 
 Recalling the definition of $\mathcal E^{\pm}_{N,1}$ from (\ref{2.0d}), the generating function identities
 (\ref{2.30a}) and  (\ref{2.30}) can be used to evaluate the generating function for
 $\{ E_{N,1}(k;(0,\phi);{\rm COE}_N) \}$,
 \begin{equation}\label{2.30b}
\mathcal E^{{\rm COE}_N}((0,\phi);\xi) :=  \sum_{k=0}^\infty  (1 - \xi)^k E_{N,1}(k;(0,\phi);{\rm COE}_N) 
\end{equation}
in terms of the corresponding generating function for the orthogonal groups.
 
 \begin{prop} (\cite[Eq.~(8.150) $N$ even case]{Fo10}, \cite[Prop.~5.1 \& Eq.~(5.10)]{BFM17})
 Let $\hat{\xi} := 2 \xi - \xi^2$ so that $1 - \hat{\xi} = (1 - \xi)^2$. We have
  \begin{equation}\label{2.31}
\mathcal  E^{{\rm COE}_N}((0,\phi);\xi) = {(1 - \xi)  \mathcal  E^{O^\nu(N+1)}((0,\phi/2);\hat{\xi})   + \mathcal  E^{O^{-\nu}(N+1)}((0,\phi/2);\hat{\xi})  \over 2 - \xi},
  \end{equation} 
  where $\nu = (-)^N$.
  \end{prop}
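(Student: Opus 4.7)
The plan is to reconstruct $\mathcal E^{{\rm COE}_N}((0,\phi);\xi)$ from the two half-generating functions $\mathcal E^{\pm}_{N,1}$ by exploiting the ``doubling'' relation $1-\hat{\xi} = (1-\xi)^2$, and then to substitute the orthogonal-group identifications (\ref{2.30}) and (\ref{2.30a}).

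\medskip

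First I split the COE generating function by parity. Writing $E_k := E_{N,1}(k;(0,\phi);{\rm COE}_N)$ and
\begin{equation*}
A := \sum_{k \ge 0} (1-\xi)^{2k} E_{2k}, \qquad B := \sum_{k \ge 0} (1-\xi)^{2k} E_{2k+1},
\end{equation*}
(\ref{2.30b}) immediately yields $\mathcal E^{{\rm COE}_N}((0,\phi);\xi) = A + (1-\xi)B$. Next, using $(1-\hat{\xi})^k = (1-\xi)^{2k}$ and the definitions in (\ref{2.0d}), one checks by a direct re-indexing (shifting $k \mapsto k+1$ in the $E_{2k-1}$ sum and using $E_{-1}=0$) that
\begin{equation*}
\mathcal E^{+}_{N,1}((0,\phi);\hat\xi) = A + B, \qquad \mathcal E^{-}_{N,1}((0,\phi);\hat\xi) = A + (1-\xi)^2 B.
\end{equation*}

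\medskip

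Second, I solve this linear $2\times 2$ system for $A$ and $B$. Subtraction gives $\hat\xi\, B = \mathcal E^{+}_{N,1}(\cdot;\hat\xi) - \mathcal E^{-}_{N,1}(\cdot;\hat\xi)$, and a similar combination yields $A$. Substituting into $A + (1-\xi) B$ and using $\hat\xi = \xi(2-\xi)$ to cancel the factor of $\xi$, I obtain
\begin{equation*}
\mathcal E^{{\rm COE}_N}((0,\phi);\xi) = \frac{(1-\xi)\,\mathcal E^{+}_{N,1}((0,\phi);\hat\xi) + \mathcal E^{-}_{N,1}((0,\phi);\hat\xi)}{2-\xi}.
\end{equation*}
This is the main algebraic step; it is mechanical, but one must be careful with the sign attached to the $(1-\xi)$ factor, since this is exactly what determines which of $\mathcal E^{\pm}_{N,1}$ gets multiplied by $(1-\xi)$ in the final answer.

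\medskip

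Third, I convert the intervals and apply the orthogonal-group identifications. Because ${\rm COE}_N$ is rotationally invariant, $E_{N,1}(k;(0,2\theta);{\rm COE}_N) = E_{N,1}(k;(-\theta,\theta);{\rm COE}_N)$, so $\mathcal E^{\pm}_{N,1}((0,2\theta);\cdot) = \mathcal E^{\pm}_{N,1}((-\theta,\theta);\cdot)$. Setting $\phi = 2\theta$, I then invoke (\ref{2.30}) and (\ref{2.30a}). For even $N$, $\mathcal E^{\pm}_{N,1}((-\theta,\theta);\hat\xi) = \mathcal E^{O^{\pm}(N+1)}((0,\theta);\hat\xi)$, so the numerator becomes $(1-\xi)\mathcal E^{O^+(N+1)}((0,\phi/2);\hat\xi) + \mathcal E^{O^-(N+1)}((0,\phi/2);\hat\xi)$, consistent with $\nu = (-)^{N} = +1$. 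For odd $N$, the roles of $O^+$ and $O^-$ are swapped in (\ref{2.30})--(\ref{2.30a}), so $\mathcal E^{\pm}_{N,1}((-\theta,\theta);\hat\xi) = \mathcal E^{O^{\mp}(N+1)}((0,\theta);\hat\xi)$, which is exactly $\nu = (-)^N = -1$. Both cases are thus captured uniformly by (\ref{2.31}).

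\medskip

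There is no real obstacle here beyond bookkeeping: the substantive content is already contained in the identities (\ref{2.27}), (\ref{2.27+}), (\ref{2.30}) and (\ref{2.30a}), which themselves came from the decimation results (\ref{5.3a+1}) and (\ref{5.3a+1a}). The one point requiring care is the parity sign $\nu = (-)^N$, which arises because (\ref{2.30}) and (\ref{2.30a}) pair $\mathcal E^{\pm}_{N,1}$ with $\mathcal E^{O^{\mp}}$ in opposite ways for $N$ even versus odd; tracking this carefully through the algebraic identity for $\mathcal E^{{\rm COE}_N}$ is where the proof can go wrong, and it is the reason the final formula is most compactly stated with the parity $\nu = (-)^N$.
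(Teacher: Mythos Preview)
Your argument is correct and follows exactly the route the paper indicates: express $\mathcal E^{{\rm COE}_N}((0,\phi);\xi)$ as the linear combination $A+(1-\xi)B$, recover $A,B$ from $\mathcal E^{\pm}_{N,1}$ evaluated at $\hat\xi$ via $(1-\hat\xi)=(1-\xi)^2$, and then insert the orthogonal-group identifications (\ref{2.30}), (\ref{2.30a}) after using rotational invariance to pass from $(0,\phi)$ to $(-\phi/2,\phi/2)$. The parity bookkeeping leading to $\nu=(-)^N$ is handled correctly.
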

  
  \begin{remark} $ $ \\
  1.~Using (\ref{2.30a}) and  (\ref{2.30}) in (\ref{2.0e}) shows
   \begin{equation}\label{2.31a}
   \mathcal  E^{{\rm CUE}_N}((0,\phi);\xi) =  \mathcal E^{O^+(N+1)}((0,\phi/2);{\xi})   \mathcal E^{O^-(N+1)}((0,\phi/2); {\xi}). 
   \end{equation} 
   More generally one has \cite{Ra03}
    \begin{equation}\label{2.31b}
    | {\rm CUE}_N| \mathop{=}^{\rm d} O^+(N+1) \cup O^-(N+1),
   \end{equation} 
   which can be deduced from the choice $w(x) = (1 + x^2)^{-N}$ in (\ref{5.0}) and applying a stereographic projection. \\
   2. Denote by $({\rm CUE}_N)^p$ the eigenvalue ensemble formed by raising each eigenvalue in CUE${}_N$ to the
   $p$-th power $(p=1,2,\dots)$. Then it was shown in \cite{Ra03} that
     \begin{equation}\label{2.31c}
   ({\rm CUE}_N)^p    \mathop{=}^{\rm d}   \bigcup_{j=0}^{p-1} {\rm CUE}_{\lceil {(N-j) \over p} \rceil}.
    \end{equation} 
    In particular, this result shows that for $p \ge N$ the $p$-th power of the eigenvalues are independently distributed.
    This latter result is known too for the Ginibre ensemble of non-Hermitian Gaussian random matrices
    \cite{HKPV08}, and moreover a superposition result analogous to (\ref{2.31c}) can be derived for this ensemble too \cite{Du18}. 
   \end{remark}
  
  \subsection{Applications of the evaluation formula for $\mathcal  E^{{\rm COE}_N}((0,\phi);\xi)$}\label{S3.2}
  The eigenvalues of the COE form a Pfaffian point process, while those of the orthogonal group are
  examples of the simpler determinantal point processes; see \cite[Ch.~6 \& 5 respectively]{Fo10} for these
  notions. Generally determinantal point processes are more tractable, both from the viewpoint of theoretical
  analysis and of numerical computation, than Pfaffian point processes.
  
  We consider first a theoretical consequence. Consider an arc $J$ of the unit circle. Then the random variable
  $\mathcal N_J = \sum_{j=1}^N \mathbbm 1_{\theta_j \in J}$ counts the number of eigenvalues in $J$. Following
  \cite{CL95}, \cite{So00} one has that for any determinantal point process with an Hermitian kernel, and such
  that ${\rm Var} \, \mathcal N_J \to \infty$ as $N \to \infty$, one has
  \begin{equation}\label{5.5a}
  {\mathcal N_J  - \langle \mathcal N_J  \rangle \over ({\rm Var} \, \mathcal N_J )^{1/2}} \mathop{\to}\limits^{\rm d}
  \mathcal N(0,1),
    \end{equation} 
where $  \mathcal N(0,1)$ denotes the standard normal distribution.  On the other hand, the generating function
$\mathcal E^{{\rm ME}_N}(J; 1 - e^{i \omega})$, defined as in the first equation of (\ref{2.0d}) and in (\ref{2.30b})
is precisely the characteristic function for $\mathcal N_J$, since ${\rm Pr} (\mathcal N_J = k) = 
 E^{{\rm ME}_N}(k;J)$. Hence (\ref{5.5a}) implies that for large $N$
  \begin{equation}\label{5.5b} 
  \mathcal E^{{\rm ME}_N}(J; 1 - e^{i \omega}) = \exp \Big ( i \omega  \langle \mathcal N_J  \rangle - {\omega^2 \over 2}
  {\rm Var} \, \mathcal N_J + {\rm O}(1) \Big ).
  \end{equation}
  Application of (\ref{5.5b}) to the RHS of (\ref{2.31}) gives
  that (\ref{5.5b}) holds for the Pfaffian point process COE${}_N$,
 \begin{equation}\label{5.5c} 
  \mathcal E^{{\rm COE}_N}((0,\phi); 1 - e^{i \omega}) = \exp \Big ( i N \omega (\phi /(2 \pi))   - {\omega^2 \over 2}
  \sigma_N^2 + {\rm O}(1) \Big ), \quad \sigma_N^2 = {2 \over \pi^2}  \log \phi N,
  \end{equation} 
  valid provided that $\phi N \to \infty$ as $N \to \infty$. In the case of $\phi$ fixed, this result (extended to the general
  circular $\beta$ ensemble), is equivalent to a Gaussian fluctuation theorem of Killip \cite{Ki08}. One remarks too that
  the recent work \cite{FL20} extends, conditional on a still to be proved conjecture from \cite{FF04}, the expansion
  (\ref{5.5c}) and its $\beta$ generalisation to make explicit the O$(1)$ term; see also  \cite[Eq.~(49)]{SLMS21a}.
  
  In the case of the GUE, arguments based on (\ref{5.5a}) were used in \cite{Gu05} to establish Gaussian fluctuation formulas
  for the displacement of the eigenvalues from their expected positions, both with reference to the bulk, and the edge.
  For example, for a GUE eigenvalue $x_k$ near the centre of the spectrum (e.g.~$x_{(N-1)/2}$ for $N$ odd), one has the limit law
  \begin{equation}\label{5.5D}  
  \lim_{N \to \infty} \Big ( {2 \beta N \over \log N} \Big )^{1/2} x_k \mathop{=}^{\rm d} \mathcal N(0,1), \quad \beta = 2.
   \end{equation} 
   In \cite{OR10} O'Rourke used the inter-relation (\ref{5.3}) in the Gaussian case to extend the validity of such limit
   laws from the GUE to the GOE (now with $\beta = 1$).
  
  The generating functions on the RHS of (\ref{2.31}) can be written as Fredholm determinants \cite[Cor.~5.2]{BFM17},
  and also in terms of particular $\sigma$PVI $\tau$-functions \cite[\S 5.3]{BFM17}. The Fredholm determinant form is well
  suited to the numerical computation of  $\{ E_{N,1}(k;(0,\phi);{\rm COE}_N) \}$ \cite{Bo10}, while power series can be
  generated from the Painlev\'e expressions \cite[Eq.~(8.81)]{Fo10}. They too scale naturally in the
  bulk limit, when $\phi$ is replaced by $2 \pi s/N$ and the limit $N \to \infty$ is taken; see \cite[Prop.~3.2 and related text]{FW24}
  for a summary. In the case of the limiting Fredholm determinant formulas
  \begin{equation}\label{2.31A} 
  \lim_{N \to \infty} \mathcal E^{O^\pm(2N+1)}((0,\pi s/N);z)  :=    \mathcal E^{O^\pm}((0,s);z) = \det  (\mathbb I - z K_\infty^{\mp,(0,s)}),
  \end{equation}
where    $\mathcal K_\infty^{\pm,(0,s)}$ is the integral operator on $(0,s)$ with kernel
 \begin{equation}\label{2.31B} 
 K_\infty^\pm(x,y) =  K_\infty(x,y)  \pm K_\infty(x,-y) ,  \quad  K_\infty(x,y) := {\sin \pi (x - y) \over \pi (x - y)}=  {\rm sinc}\, \pi (x-y).
  \end{equation}
  the resulting characterisation of  bulk scaled conditioned gap probabilities $\{ E_{\infty,1}(k;(0,s) \}\}$ is equivalent to that
   given in \cite[Eq.~(20.1.20)]{Me04}. This characterisation is most conveniently written as the 
   bulk scaling limit of (\ref{2.31}) itself, which reads
   (\cite[Eqns.~(8.152)]{Fo10}) 
   \begin{equation}\label{5.5d}  
   \mathcal  E^{{\rm COE}_\infty} ((0,s); \xi)  = {(1 - \xi)  \mathcal E^{O^+}((0,s/2);\xi(2 - \xi)) +  \mathcal E^{O^-}((0,s/2);\xi(2-\xi)) \over 2 - \xi}.
  \end{equation}   
  
  Recent application of (\ref{5.5d}) has been made to the computation of the $N \to \infty$ limit of the power
  spectrum statistic; see the recent series of works \cite{ROK17,ROK20,RK23} for context and related 
  integrability results. For eigenangles $0 < \theta_1 < \cdots < \theta_N < 2 \pi$,
the power spectrum statistic is defined by the Fourier sum
 \begin{equation}\label{1.1z}
 S_N(\omega) = {1 \over N \Delta_N^2} \sum_{1 \le l,m\le N} \langle
 \delta \theta_l \delta \theta_m \rangle e^{i \omega ( l - m)}, \qquad \omega \in \mathbb R.
 \end{equation}
 The quantity $\delta \theta_l = \theta_l - \langle \theta_l  \rangle$ is the displacement from the mean $ \langle \theta_l  \rangle$
 and so $[\langle
 \delta \theta_l \delta \theta_m \rangle]_{l,m=1}^N$ is the covariance matrix of the level displacements. Further $\Delta_N := 2 \pi/N$ 
 is the mean spacing between eigenvalues. We have from \cite[Eq.~(3.7)]{FW24}
  \begin{equation}\label{3.2c}
  S_{\infty,1}(\omega) := \lim_{N \to \infty} S_{N}^{\rm COE}(\omega) = {1 \over  2 \sin^2 {\omega \over 2}} {\rm Re}
  \int_0^\infty  \mathcal E^{{\rm COE}_\infty} ((0,s); 1 - e^{i \omega}) \, ds.
   \end{equation} 
   As well as being tractable for the purposes of providing a tabulation of $S_{\infty,1}(\omega)$ (due to this being even in $\omega$ and periodic of period
   $2 \pi$, the range $0 < \omega \le \pi$ suffices), by use of an analogue of the asymptotic expansion (\ref{5.5c}) applying to
   $E^{{\rm COE}_\infty} ((0,s); 1 - e^{i \omega})$ in the limit $s \to \infty$, it is possible to show from (\ref{3.2c})
   that for $\omega \to 0^+$, 
  \begin{equation}\label{3.2d} 
   S_{\infty,1}(\omega) \sim {1 \over \beta \pi} {1 \over | \omega|} + {1 \over \beta^2 \pi^3} \log | \omega | + {\rm O}(\omega), \quad \beta = 1;
    \end{equation}   
   see \cite[Eq.~(3.11)]{FW24}.

  In the recent work \cite{RTK23} $S_\infty(\omega)$ has been shown to relate to the covariances of two nearest neighbour spacings between
bulk scaled eigenvalues. Thus define $x_l = (N / 2 \pi) \theta_l$, $(l=1,2,\dots,)$, so that the average of $x_{l+1} - x_{l}$ is unity.
 Define $s_j(p) = x_{j+1+p} - x_j$ as the spacing between $x_j$ and its $(p+1)$-th neighbour to the right $x_{j+1+p}$. The  covariance
 of interest is
$$
 {\rm cov}_\infty (s_j(0), s_{j+k+1}(0)) =  {1 \over 2} \Big ({\rm Var}_\infty(s_j(k+1)) - 2 {\rm Var}_\infty(s_j(k)) + {\rm Var}_\infty(s_j(k-1)) \Big ),
 $$
 where the equality follows from a result of  \cite{BFPW81}.
  Due to translation invariance, this is independent of $j$. The result of \cite{RTK23} gives that the power spectrum associated
  with these covariances relates to  $S_\infty(\omega)$ according to the general formula
  \begin{equation}\label{1.1x}
  \sum_{k=-\infty}^\infty {\rm cov}_\infty(s_j(0), s_{j+k}(0)) e^{i \omega k} = 4 \sin^2(\omega/2) S_\infty(\omega).
 \end{equation}  
 A consequence of the evaluation formula (\ref{3.2c}) in the case of the bulk scaled COE, is the asymptotic expansion \cite[Eq.~(11)]{RTK23}, \cite[Eq.~(3.12)]{FW24}
  \begin{equation}\label{3.9}   
  {\rm cov}_\infty  (s_j(0), s_{j+k}(0)) \mathop{\sim}\limits_{k \to \infty} - {1 \over \beta  \pi^2 k^2} - {6 \over \beta^2  \pi^4 k^4}
  \Big ( \log (2 \pi k) + \gamma - {11 \over 6} \Big ) + \cdots, \quad \beta = 1,
  \end{equation}
  where $\gamma$ denotes Euler's constant.

   \subsection{Decimations relating orthogonal symmetry and symplectic symmetry ensembles}
   Soon after the discovery of the inter-relation (\ref{2.0c}) by Dyson, Mehta and Dyson \cite[Eq.~(10)]{MD63}
   deduced, by making use of the method of integration over alternative variables \cite{Me04}, the inter-relation
 \begin{equation}\label{8.1}    
 {\rm alt} \, {\rm COE}_{2N}    \mathop{=}\limits^{\rm d}  {\rm CSE}_N.
   \end{equation}
   Here CSE stands for the circular symplectic ensemble (see e.g.~\cite[\S 2.2.2]{Fo10}), which has 
   eigenvalue PDF given by (\ref{1.1}) with $\beta = 4$.
   
 We recall from the paragraph including (\ref{3.0}) that we use the notation
  ${\rm SE}_N(w_4)$ to denote the $\beta = 4$ case of (\ref{3.0}).
   In the Hermitian case, a natural question to ask is for the specification of pairs of weights $(f,g)$ such that,
   for example even$\,({\rm OE}_{2N+1}(f)) \mathop{=}^{\rm d} {\rm SE}_N(g)$. In fact such a question reduces back to consideration
   of the previously considered decimation identities of superimposed ensembles, (\ref{5.3}) and (\ref{5.3x}).
   
   \begin{prop}\label{P3.2}
   (\cite{FR01}) The inter-relation identities
    \begin{equation}\label{L1} 
   {\rm even} \, ({\rm OE}_{2N}(f) \cup {\rm OE}_{2N+1}(f) )  \mathop{=}^{\rm d}  {\rm UE}_{2N}(g), \quad
    {\rm even} \, ({\rm OE}_{2N}(f) \cup {\rm OE}_{2N}(f) )  \mathop{=}^{\rm d}  {\rm UE}_{2N}(g)
   \end{equation}   
    are equivalent to the inter-relation identities
    \begin{equation}\label{L2} 
   {\rm even} \, ({\rm OE}_{2N+1}(f) )  \mathop{=}^{\rm d}  {\rm SE}_N((g/f)^2), \quad
    {\rm even} \, ( {\rm OE}_{2N}(f) )  \mathop{=}^{\rm d}  {\rm SE}_N((g/f)^2).
   \end{equation}  
   Similarly, the inter-relation identities
    \begin{equation}\label{L3} 
   {\rm odd} \, ({\rm OE}_{2N}(f) \cup {\rm OE}_{2N-1}(f) )  \mathop{=}^{\rm d}  {\rm UE}_{2N}(g), \quad
    {\rm odd} \, ({\rm OE}_{2N}(f) \cup {\rm OE}_{2N}(f) )  \mathop{=}^{\rm d}  {\rm UE}_{2N}(g)
   \end{equation}   
    are equivalent to the inter-relation identities
    \begin{equation}\label{L4} 
   {\rm odd} \, ({\rm OE}_{2N-1}(f) )  \mathop{=}^{\rm d}  {\rm SE}_N((g/f)^2), \quad
    {\rm odd} \, ( {\rm OE}_{2N}(f) )  \mathop{=}^{\rm d}  {\rm SE}_N((g/f)^2).
   \end{equation}  
   \end{prop}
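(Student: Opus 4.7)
The plan is to compute the joint densities on each side of the claimed identities and reduce all four statements to Dixon--Anderson (DA) integral evaluations of (\ref{DA}); the equivalences will then follow because the two forms are governed by the same DA evaluation with parameters shifted in a predictable way.

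For the first case of (\ref{L1}) and (\ref{L2}), I would begin by using the factorization (\ref{5.3a}) applied to the $4N+1$ ordered eigenvalues of ${\rm OE}_{2N}(f) \cup {\rm OE}_{2N+1}(f)$ to write the joint density as proportional to $\prod_l f(x_l) \cdot \Delta(\tilde x_1,\dots,\tilde x_{2N+1})\,\Delta(y_1,\dots,y_{2N})$, where $\tilde x$ and $y$ index the odd and even positions respectively. Integrating out $\tilde x$ subject to interlacing yields the marginal density of $y$ in the form $\prod f(y_l)\,\Delta(y)\cdot I(y)$ with $I(y)=\int \prod f(\tilde x_l)\,\Delta(\tilde x)\,d\tilde x$ on the interlacing simplex. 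The first case of (\ref{L1}) is then equivalent to the functional identity $I(y)\propto \prod(g/f)(y_l)\,\Delta(y)$. For (\ref{L2}), I would instead factor $\Delta(x_1,\dots,x_{2N+1}) = \Delta(\tilde x)\,\Delta(y)\,\prod_{j,k}|\tilde x_j-y_k|$ and marginalize ${\rm OE}_{2N+1}(f)$ directly over $\tilde x$; this gives the density of $y$ as $\prod f(y_l)\,\Delta(y)\cdot J(y)$ with $J(y)=\int\prod f(\tilde x_l)\,\Delta(\tilde x)\prod_{j,k}|\tilde x_j-y_k|\,d\tilde x$, so that the first case of (\ref{L2}) is equivalent to $J(y)\propto \prod (g^2/f^3)(y_l)\,\Delta(y)^3$.

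The key observation completing the argument is that $I(y)$ and $J(y)$ are both DA integrals of the form (\ref{DA}) over the same class of interlacing simplices, with the $y_l$ playing the role of the parameters $a_p$. The explicit factor $\prod_{j,k}|\tilde x_j-y_k|$ inside $J(y)$ shifts the DA exponent $s_p$ attached to each $y_l$ from $1$ to $2$, while the two endpoint parameters --- which depend only on $f$ through the support and the boundary behaviour of $f$ --- remain unchanged. The DA evaluation $\prod |a_j-a_k|^{s_j+s_k-1}$ then turns $\Delta(y)^1$ into $\Delta(y)^3$ on the $y$--$y$ pairs and shifts each endpoint--$y$ factor from $(g/f)(y_l)$ to $(g^2/f^3)(y_l)$ in precisely the required way. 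Hence the constraint on $(f,g)$ coming from the $I$-evaluation is the same as that coming from the $J$-evaluation, establishing the first equivalence; the second case of (\ref{L1}) $\Leftrightarrow$ (\ref{L2}) and the two cases of (\ref{L3}) $\Leftrightarrow$ (\ref{L4}) are handled by parallel reductions, using the companion combinatorial identity for even-length superpositions and the ``freezing of an eigenvalue'' mechanism noted around (\ref{5.3c}) to supply the extra boundary contribution.

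The main obstacle will be sign-tracking in the factorization of the Vandermonde across the interlacing structure and managing the non-compact cases (Laguerre and Cauchy), where the DA prefactor $\prod\Gamma(s_i)/\Gamma(\sum s_i)$ must remain finite for the parameters appearing in the weight families of Proposition~\ref{P2.2}. Once these details are in hand, the uniqueness statement at the end of Proposition~\ref{P2.2} guarantees that the sets of $(f,g)$ satisfying the two versions of the identity coincide, completing the proof.
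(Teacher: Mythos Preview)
Your reduction of each identity to an integral condition is correct, but there is a dimension mismatch you have glossed over. The integral $I(y)$ coming from the first case of (\ref{L1}) involves $2N+1$ integration variables $\tilde x_j$ interlacing $2N$ parameters $y_1,\dots,y_{2N}$, whereas the integral $J(y)$ coming from the first case of (\ref{L2}) involves only $N+1$ integration variables interlacing $N$ parameters $y_1,\dots,y_N$. These are Dixon--Anderson integrals of \emph{different} sizes; the shift $s_p=1\mapsto 2$ you describe changes the integrand but cannot account for halving the number of $\tilde x$'s and $y$'s. Your phrase ``over the same class of interlacing simplices'' hides this discrepancy.

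What your computation does show is that, \emph{once $f$ is assumed to be a Jacobi weight} so that (\ref{DA}) applies, the relation between $f$ and $g$ forced by $I$ agrees with that forced by $J$; limits then cover the remaining classical weights. But to lift this to the equivalence asserted in the proposition for general $(f,g)$ you invoke the uniqueness clause of Proposition~\ref{P2.2}. That uniqueness is stated only for the superposition identities of the type (\ref{L1}): it yields (\ref{L1}) $\Rightarrow$ classical $\Rightarrow$ (\ref{L2}), but not the converse. You would need an independent uniqueness argument for (\ref{L2}) --- that (\ref{L2}) alone forces $(f,g)$ to be classical --- and you have not sketched one. The paper gives no proof here and simply cites \cite{FR01}; a workable route is to reduce both (\ref{L1}) and (\ref{L2}) to the \emph{same} functional constraint on $(f,g)$ (by specialising the free $y$-parameters until only a single-interval integral survives), rather than classifying the two solution sets separately via DA and then matching them after the fact.
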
 
   
The analogue of (\ref{2.0d}) following from (\ref{8.1}) is \cite{MD63}   
   \begin{align}\label{8.140a}
\mathcal E_{N,4}^{\rm CSE}((-\theta, \theta);\xi) & 
  = {1 \over 2} \Big (
\mathcal E_{2N}^{+}((0,\theta);\xi) + \mathcal E_{2N}^{-}((0,\theta);\xi) \Big ) \nonumber \\
& = {1 \over 2} \Big (
\mathcal E^{O^+(2N+1)}((0,\theta);\xi) + \mathcal  E^{O^-(2N+1)}((0,\theta);\xi) \Big ),
\end{align}
where the second line follows from  (\ref{2.30}) and  (\ref{2.30a}); cf.~(\ref{2.31}).
Taking the bulk scaling limit gives 
 (\cite[Eqns.~(8.152)]{Fo10}) 
   \begin{equation}\label{5.5d+}  
   \mathcal  E^{{\rm CSE}_\infty} ((0,s); \xi)  =  {1 \over 2} \Big ( \mathcal E^{O^+}((0,s/2);\xi ) +  \mathcal E^{O^-}((0,s/2);\xi ) \Big );
  \end{equation}  
cf.~(\ref{5.5d}). The applications of    (\ref{5.5d}) discussed in \S \ref{S3.2} above all have counterparts for
(\ref{5.5d+}). For the details we refer to the previously cited  references.
  
  Matrix realisations of examples of the decimation inter-relations of Proposition \ref{P3.2} have been 
  given in \cite{FR02b}. In particular, both examples of realisations of particular
  interlaced ensembles and decimation identities given in \S \ref{S2.3} can be modified to apply
  to Proposition \ref{P3.2} \cite{FR02b}. Consider for definiteness the Gaussian case of the first identity in (\ref{L2}),
  which reads
  \begin{equation}\label{L2a} 
   {\rm even} \, ({\rm OE}_{2N+1}(e^{-x^2}) )  \mathop{=}^{\rm d}  {\rm SE}_N(e^{-x^2}) .
  \end{equation}  
  One begins with a complex Hermitian matrix, $A$ say, realisation of $ {\rm SE}_N(e^{-x^2})$ which up to a scale is
  the same as for the GSE; see \cite[Def.~1.3.2]{Fo10}. The matrix $A$ is of size $2N \times 2N$ and based on $2 \times 2$
  complex matrix representation of a quaternion; see e.g.~\cite[Eq.~(1.24)]{Fo10}, and furthermore
  is doubly degenerate. From this, the   $4N \times 4N$, four times degenerate, real symmetric matrix $\tilde{A}$ is formed by
 block  replacing each element according to (\ref{Q}). Next, a $(4N+1) \times (4N+1)$ bordered matrix of the structure
 (\ref{7.3y}) is formed with $\mathbf x$ equal to $\sqrt{b}$ times a $4N \times 1$ standard complex Gaussian vector, and
 the bottom right entry $a$ equal to $\sqrt{b}$ times a standard real Gaussian. In this setting, analogous to the
 result of Proposition \ref{P2.3}, the eigenvalues of $M$ are the distinct eigenvalues of $A$,
 $\{y_j\}_{j=1}^N $ say, as well as an additional $N+1$ eigenvalues $ \{x_j\}_{j=1}^{N+1}$, which must
 satisfy the interlacing (\ref{7.3y1}). Due to this interlacing, and by construction, one has
 \begin{equation}\label{7.3y3d} 
 {\rm even} \, (M ) \mathop{=}\limits^{\rm d}  {\rm SE}_N(e^{-x^2})
  \end{equation}
  (cf.~(\ref{7.3y3d})), and furthermore the joint eigenvalue PDF of the independent eigenvalues of
  $M$ is given by (\ref{7.3y2}) multiplied by the further factor
  $\prod_{j=1}^N \prod_{k=1}^{N+1} | x_j - y_k|$. Choosing $b = {1 \over 2}$ this is precisely the
  eigenvalue PDF of GOE${}_{2N+1}$, and so (\ref{7.3y3d}) provides a matrix theoretic realisation of
  (\ref{L2a}).

Another point of interest is that the Dixon-Anderson integral (\ref{DA}) contains as special
cases  both the identities in (\ref{L2}) with Jacobi weights. Moreover, from this starting point,
the Laguerre and Gaussian weight cases can be deduced by a limiting procedure; see
\cite[\S 4.2.3]{Fo10} for details.

\subsection{Decimations from a generalised Dixon-Anderson integral}
The  Dixon-Anderson integral (\ref{DA}) relates to integration over the interlaced region
(\ref{DA1}). A generalisation of this region containing a parameter $r$ is $A_r$ specified by
\begin{equation}\label{Aa}
a_j > \lambda_{r(j-1)+1} > \lambda_{r(j-1)+2} > \cdots >
\lambda_{rj} > a_{j+1} \qquad (j=1,\dots,N-1).
\end{equation}
Thus between every pair of variables $a_j, a_{j+1}$ there are $r$ variables $\lambda_i$.
In terms of this region, a generalisation of the Dixon-Anderson integral (\ref{DA})  has
been obtained in \cite{Fo07}.

\begin{prop}
For a known proportionality, one has
\begin{eqnarray}\label{LR}
\int_{A_r} d \lambda_1 \cdots d \lambda_{r(N-1)} \,
\prod_{1 \le j < k \le r(N-1)} (\lambda_j - \lambda_k)^{2/(r+1)}
\prod_{j=1}^{r(N-1)} \prod_{p=1}^N | \lambda_j - a_p|^{s_p - 1} \nonumber \\
\qquad \propto \prod_{1 \le j < k \le N} (a_j - a_k)^{r(s_j + s_k - 2/(r+1))}.
\end{eqnarray}
\end{prop}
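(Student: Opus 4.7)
The plan is to determine the $a$-dependence of the LHS by combining a homogeneity/scaling argument, the behaviour at the coincidence limits $a_{k+1}\to a_k$, and a local Selberg-integral evaluation, with the overall constant absorbed into the unspecified proportionality. First, a direct degree count using the exponent $2/(r+1)$ on the $\lambda$-Vandermonde, the exponents $s_p-1$ on the factors $|\lambda_j-a_p|$, and the dimension $r(N-1)$ of the region $A_r$ verifies that both sides scale the same way under $(\lambda,a)\mapsto(t\lambda,ta)$; combined with translation invariance this reduces the problem to matching the $a$-dependence up to a single multiplicative constant.

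Next, I would analyse the limit $a_{k+1}\to a_k$ for fixed $k$. In this limit the $r$ variables $\lambda_{r(k-1)+1},\dots,\lambda_{rk}$ trapped between the colliding endpoints must collapse with them, so after freezing the remaining $\lambda$'s the leading behaviour of the LHS is governed by the local integral
\begin{equation*}
\int_{a_{k+1}<\mu_1<\cdots<\mu_r<a_k}\prod_{1\le i<j\le r}(\mu_i-\mu_j)^{2/(r+1)}\prod_{i=1}^r(\mu_i-a_{k+1})^{s_{k+1}-1}(a_k-\mu_i)^{s_k-1}\,d\mu.
\end{equation*}
Rescaling $\mu_i=a_{k+1}+(a_k-a_{k+1})u_i$ to the unit interval and invoking Selberg's formula with Jack parameter $\lambda=1/(r+1)$ evaluates this as a known constant times $(a_k-a_{k+1})^{r(s_k+s_{k+1}-2/(r+1))}$, matching the corresponding adjacent factor on the proposed RHS. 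The permutation symmetry of the integrand under joint relabelling of the $a_p$'s with the corresponding blocks of $\lambda$'s then forces the same exponent to appear for every pair $(j,k)$, and homogeneity rules out any additional $a$-dependent multiplicative factor.

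The main obstacle lies in rigorously promoting the adjacent-pair singularity analysis into the full product form for $r\ge 2$. In the classical case $r=1$ (Dixon-Anderson) one can, for positive integer $s_p$, identify the LHS as a polynomial in each $a_p$ of the correct degree with zeros prescribed by the coincidence limits, and polynomial identification then closes the argument at once; for $r\ge 2$ the fractional Vandermonde exponent invalidates any such polynomial argument. One must instead appeal either to analytic continuation in the $s_p$, treating the LHS as a holomorphic function on an appropriate domain and matching the conjectured RHS on a dense subset, or to the Jack-polynomial / $BC_N$-type Selberg framework used in \cite{Fo07}, in which the generalized region $A_r$ appears naturally via a limit of a root-system hypergeometric integral whose product-form evaluation is known. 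Verifying that this framework applies to the exact integrand at hand, and that the resulting evaluation agrees with the predicted exponents $r(s_j+s_k-2/(r+1))$ for \emph{all} pairs, is the technical heart of the argument.
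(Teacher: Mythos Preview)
The review paper does not prove this proposition; it records the identity and defers to \cite{Fo07} for the derivation. So there is no in-paper proof to compare against, and your proposal has to be assessed on its own and against the original argument in \cite{Fo07}.

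Your heuristic correctly identifies the exponents. The degree count is right, the local Selberg evaluation at an adjacent coincidence $a_{k+1}\to a_k$ does produce the power $r(s_k+s_{k+1}-2/(r+1))$, and the integral is genuinely symmetric in the pairs $(a_p,s_p)$ (relabelling so as to restore the ordering $a_1>\cdots>a_N$ leaves both integrand and region unchanged). \emph{If} one already knew that the LHS factors as $\prod_{j<k}(a_j-a_k)^{e_{jk}}$, these three ingredients would indeed pin down every $e_{jk}$.

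But that product form is precisely the content of the proposition, and your argument assumes it rather than establishes it. Homogeneity, translation invariance and prescribed singular exponents at the pairwise coincidences do not determine a multivariate function up to a constant: for $N\ge 3$ one can multiply the conjectured RHS by any smooth, nonvanishing function of the cross-ratios $(a_j-a_k)/(a_l-a_m)$ without disturbing any of your checks. You acknowledge this in your final paragraph, but the remedies you offer are either a pointer back to \cite{Fo07} itself (the Jack/$BC_N$ Selberg framework) or an analytic-continuation programme in the $s_p$ whose details are left open --- and note that even for integer $s_p$ the exponent $2/(r+1)$ on the $\lambda$-Vandermonde blocks the polynomial-identification route that works when $r=1$. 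As it stands, the proposal is a consistency check on the stated formula rather than a proof of it. The derivation in \cite{Fo07} avoids this difficulty by proceeding inductively on $N$, so that the product over pairs is built up one factor at a time instead of having to be recognised globally.
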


Denote by ME${}_N(w_\beta)$ the Hermitian ensemble corresponding to (\ref{3.0}), as is
consistent with the notation used below (\ref{3.1a}), and further assume the ordering of
eigenvalues $x_1 > x_2 > \cdots > x_N$. Let D${}_r$ denote the operation of 
observing only those eigenvalues labelled by a multiple of
$r$.
As a corollary of (\ref{LR}) one has \cite{Fo07}, \cite[Prop.~4.4.2]{Fo10}
\begin{eqnarray}
&& {\rm D}_{r+1} ({\rm ME}_{2/(r+1), (r+1)N + r} ((1+x)^a (1 - x)^b)) =
{\rm ME}_{2(r+1) , N} ( (1+x)^{(r+1) a + 2r} (1 - x)^{(r+1)b + 2r} ) 
\nonumber  \\
&& {\rm D}_{r+1} ({\rm ME}_{2/(r+1), (r+1)N } ((1 - x)^b)) =
{\rm ME}_{2(r+1), N} ((1 - x)^{(r+1)b + 2r} ),
\label{3.u.2}
\end{eqnarray}
where the weights herein are supported on $-1<x<1$. In the case $r=1$ these
inter-relations correspond to the Jacobi case of the 
 identities in (\ref{L2}). By taking certain limits, Laguerre and Gaussian
 versions can also be obtained. 
 
 There is also a circular analogue of the Dixon-Anderson integral 
 \cite{FR06}, and this in turn permits a generalisation analogous to
 (\ref{LR}). In terms of the notation ${\rm CE}_N(\beta)$ for the circular
 $\beta$ ensemble (\ref{1.1}), a notable consequence of the latter is
 the generalisation of (\ref{8.1})
 \begin{equation}\label{3.111e}
{\rm alt}_{r+1} ( {\rm CE}_{(r+1) N}(2/(r+1)) ) = {\rm CE}_N(2(r+1)),
\end{equation}
where alt${}_{r+1}$ denotes the distribution of any sequence of $N$ eigen-angles,
observed from the original ensemble of $(r+1)N$ eigenvalues,
chosen with regular spacing of $r$ eigenvalues in between. 

Although not
the main focus of this review, one should remark that the inter-relations 
(\ref{3.u.2}) and (\ref{3.111e}) 
can also be interpreted as  examples of duality relations, in which the Dyson
parameter $\beta$ is mapped from $2/(r+1)$ to $2(r+1)$. In particular $r = 1$ 
corresponds to a duality between an ensemble with orthogonal symmetry, and
an ensemble with symplectic symmetry.

\section{Inter-relations involving the spectral form factor and discrete determinantal point processes}
\subsection{The spectral form factor}
Our attention to date has been focussed on random matrix inter-relations
involving superposition or decimations, which as we have seen admit a substantial theory.
Another topic in random matrix theory which gives rise to systematic inter-relations
is the computation of the spectral form factor (also known as structure function)
 \begin{equation}\label{Sa0}
S_N(k) = \Big \langle \Big |  \sum_{j=1}^N e^{i k x_j} \Big |^2  \Big \rangle -
 \Big |   \Big \langle  \sum_{j=1}^N e^{i k x_j}   \Big \rangle  \Big |^2.
\end{equation}
Note that this quantity can be considered as the variance of the (complex) linear statistic
$\sum_{j=1}^N e^{i k x_j}$. As such it relates to the two-point correlation $\rho_{(2)}(x,x')$ and
the one-point density $\rho_{(1)}(x)$ according to the double integral
\begin{equation}\label{Sa1}
S_N(k) = \int_{\Omega} dx \, e^{i k x}  \int_{\Omega} dy \, e^{-i k y} \Big ( \rho_{(2)}(x,y) 
+ \delta(x-y) \rho_{(1)}(y) - \rho_{(1)}(x) \rho_{(1)}(y) \Big ),
\end{equation}
where $\Omega$ denotes the support of the spectrum. The quantity $ \rho_{(2)}(x,y) -  \rho_{(1)}(x) \rho_{(1)}(y)$
is referred to as the truncated two-point correlation, or as the two-point cluster function and denoted
$\rho_{(2)}^T(x,y)$. In the
case of the circular ensemble, for $k$ integer rotational invariance allows (\ref{Sa1}) to be reduced from
a double integral to a single integral
$$
S_N(k) =  2 \pi \int_{-\pi}^\pi e^{i k \theta}  \rho_{(2)}^T(\theta,0) \, d \theta - N.
$$
It is the bulk scaled limit  of this single integral  which
was computed explicitly in the pioneering work of Dyson \cite{Dy62a}.

Inter-relations between ensembles show themselves in the computation of $S_N(k)$ for the
GUE \cite{BH97}, the LUE \cite{Fo21b}, and most recently for  the elliptic Ginibre ensemble
\cite{SK24}. We will consider the GUE and LUE cases first, which exhibit inter-relations with the
LUE and JUE respectively. With regards to the latter, we will use the notation ${\rm JUE}_{N,(a,b)}$
for the ensemble ${\rm UE}_N(x^a(1-x)^b \mathbbm 1_{0 < x < 1})$.

\begin{prop}\label{P4.1}
One has the evaluation formulas
 \begin{align}
S_N^{{\rm GUE}_N}(k) & =  \int_0^k t \rho_{(1)}^{{\rm LUE}_{N,0}}(t^2/2) \, dt \nonumber \\
S_N^{{\rm LUE}_{N,\alpha}}(k) & =   \int_{1/(1+k^2)}^1  \rho_{(1)}^{{\rm JUE}_{N,(\alpha,0)}}(t) \, dt. 
\end{align}
\end{prop}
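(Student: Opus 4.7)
The plan is to reduce $S_N(k)$ to an operator expression exploiting the determinantal structure of a unitary ensemble, observe that $S_N(0)=0$, and then identify $\tfrac{d}{dk}S_N(k)$ with the integrands appearing in the two claimed formulas. For any ${\rm UE}_N(w_2)$ one has $\rho_{(2)}^T(x,y)=-K_N(x,y)^2$ with $K_N$ the orthogonal polynomial kernel, so (\ref{Sa1}) collapses to
\begin{equation*}
S_N(k)={\rm Tr}(P_N)-{\rm Tr}\bigl(P_N e^{ikX}P_N e^{-ikX}\bigr),
\end{equation*}
where $P_N=\sum_{j=0}^{N-1}|\phi_j\rangle\langle\phi_j|$ projects onto the span of the orthonormal functions $\phi_j=\sqrt{w_2}\,p_j$ and $X$ is multiplication by $x$. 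The equality $S_N(0)=0$ is immediate from $P_N^2=P_N$, so both claims are equivalent (by integrating from $0$ to $k$) to the derivative identities
\begin{equation*}
\tfrac{d}{dk}S_N^{{\rm GUE}_N}(k)=k\,\rho_{(1)}^{{\rm LUE}_{N,0}}(k^2/2),\qquad \tfrac{d}{dk}S_N^{{\rm LUE}_{N,\alpha}}(k)=\frac{2k}{(1+k^2)^2}\,\rho_{(1)}^{{\rm JUE}_{N,(\alpha,0)}}\!\Bigl(\frac{1}{1+k^2}\Bigr).
\end{equation*}

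The key reduction exploits the three-term recurrence of the underlying orthogonal polynomials, which makes $X$ tridiagonal in the basis $\{\phi_j\}$. Consequently $[X,P_N]$ is the rank-two operator $c_N\bigl(|\phi_N\rangle\langle\phi_{N-1}|-|\phi_{N-1}\rangle\langle\phi_N|\bigr)$, with $c_N=\sqrt{N/2}$ (Hermite weight) and $c_N=\sqrt{N(N+\alpha)}$ (Laguerre weight), up to a conventional sign. From $\tfrac{d}{dk}(e^{ikX}P_N e^{-ikX})=i[X,P_N(k)]$ and cyclicity of the trace,
\begin{equation*}
\tfrac{d}{dk}S_N(k)=-i\,{\rm Tr}\bigl([X,P_N]\,e^{ikX}P_N e^{-ikX}\bigr)=2c_N\,{\rm Im}\,\langle\phi_N|e^{ikX}P_N e^{-ikX}|\phi_{N-1}\rangle,
\end{equation*}
so the entire $k$-derivative is encoded in a single off-diagonal matrix element of the unitary conjugate of $P_N$, which expands as $\sum_{m=0}^{N-1}M_{Nm}(k)\,\overline{M_{N-1,m}(k)}$ with $M_{jl}(k):=\int e^{ikx}\phi_j(x)\phi_l(x)\,dx$.

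For GUE, evaluating $M_{jl}(k)$ by completing the square in $e^{ikx-x^2}$ and applying the Hermite addition formula $H_n(x+ik/2)=\sum_{p=0}^n\binom{n}{p}(ik)^{n-p}H_p(x)$ yields $M_{jl}(k)=(-1)^p i^{j+l}(k/\sqrt 2)^{q-p}\sqrt{p!/q!}\,e^{-k^2/4}L_p^{q-p}(k^2/2)$ with $p=\min(j,l)$, $q=\max(j,l)$. The phase of $M_{Nm}\overline{M_{N-1,m}}$ collapses to a single factor $i$, and bookkeeping gives
\begin{equation*}
\tfrac{d}{dk}S_N^{{\rm GUE}_N}(k)=\frac{k\,e^{-u}}{(N-1)!}\sum_{m=0}^{N-1}m!\,u^{N-1-m}L_m^{N-m}(u)\,L_m^{N-1-m}(u),\qquad u=k^2/2.
\end{equation*}
Matching this to $k\,\rho_{(1)}^{{\rm LUE}_{N,0}}(u)=k\,e^{-u}\sum_{j=0}^{N-1}L_j(u)^2$ reduces the GUE case to the bilinear Laguerre collapse identity
\begin{equation*}
\frac{1}{(N-1)!}\sum_{m=0}^{N-1}m!\,u^{N-1-m}L_m^{N-m}(u)\,L_m^{N-1-m}(u)=\sum_{j=0}^{N-1}L_j(u)^2,
\end{equation*}
which I would prove by induction on $N$ using the contiguous relations $L_m^{\alpha+1}(u)-L_m^\alpha(u)=L_{m-1}^{\alpha+1}(u)$ and $\tfrac{d}{du}L_m^\alpha(u)=-L_{m-1}^{\alpha+1}(u)$.

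The LUE case runs along the same template. The required bilinear Fourier integral $\int_0^\infty e^{ikx}L_j^\alpha(x)L_l^\alpha(x)x^\alpha e^{-x}dx$ is evaluated by specialising the Laplace parameter to $s=1-ik$ in a classical bilinear formula (derivable, for instance, from the Hardy--Hille generating function), and produces an expression built from Jacobi polynomials of argument $2/(1+k^2)-1$ times a power of $(1-ik)$. Since $|1-ik|^{-2}=1/(1+k^2)$, phase cancellation leaves the imaginary part of each summand proportional to $k/(1+k^2)^{\alpha+2}$, and a Jacobi analogue of the Laguerre collapse identity above reduces the double sum to $\rho_{(1)}^{{\rm JUE}_{N,(\alpha,0)}}(1/(1+k^2))$, delivering the claimed derivative. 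The main obstacle throughout is precisely this bilinear orthogonal polynomial collapse identity: that the off-diagonal matrix element of $e^{ikX}P_N e^{-ikX}$ is, up to a computable prefactor, the one-point density of the associated companion ensemble. These identities are nontrivial but admit uniform proofs, either by the inductive procedure sketched above or, more conceptually, by recognising the resulting sum as a diagonal Christoffel--Darboux expression for a related kernel.
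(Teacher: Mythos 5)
Your operator-theoretic reduction is sound in outline and, for the GUE, the bookkeeping appears consistent (your bilinear Laguerre sum does equal $\sum_{j=0}^{N-1}L_j(u)^2$ for $N=1,2$, as it must), but the route is genuinely different from the paper's. The paper does not differentiate ${\rm Tr}(P_N e^{ikX}P_N e^{-ikX})$ via the commutator $[X,P_N]$ (i.e.\ via the Christoffel--Darboux identity). Instead it deforms to the two-variable integral $I_N(t_1,t_2)=\iint e^{it_1x+it_2y}K_N(x,y)^2\,dx\,dy$, applies the operator $(t_1-t_2)(\partial_{t_1}+\partial_{t_2})$, integrates by parts, and invokes the Tracy--Widom derivative identity $(\partial_x+\partial_y)K_N^{(G)}(x,y)\propto \psi_N(x)\psi_{N-1}(y)+\psi_{N-1}(x)\psi_N(y)$. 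The payoff of that choice is that only the \emph{diagonal} Fourier transforms $\int e^{itx}\psi_n(x)^2\,dx\propto e^{-t^2/4}L_n(t^2/2)$ are required, and the resulting two-term antisymmetric combination is precisely the Christoffel--Darboux numerator of the Laguerre kernel; dividing by $(t_1-t_2)$ and letting $t_2\to-t_1$ lands directly on $t\,\rho_{(1)}^{{\rm LUE}_{N,0}}(t^2/2)$ with no resummation over the band index. Your route, by contrast, requires all off-diagonal displacement matrix elements $M_{jl}(k)$ and then the $N$-term bilinear ``collapse'' identity $\frac{1}{(N-1)!}\sum_{m=0}^{N-1} m!\,u^{N-1-m}L_m^{N-m}(u)L_m^{N-1-m}(u)=\sum_{j=0}^{N-1}L_j(u)^2$.

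That collapse identity is where the entire content of the inter-relation now resides, and you assert it with only a sketched induction; until it (and the corresponding Jacobi identity for the LUE case) is actually established, the argument is not complete. The identity is true --- it is essentially equivalent to the proposition --- but neither of your proposed strategies is obviously easier than the statement itself, and the most economical proof of it may well be to retrace the paper's two-variable manoeuvre, which would render the detour through the off-diagonal $M_{jl}$ unnecessary. The LUE half is sketchier still: you would need the explicit bilinear Laplace transform of $L_j^{(\alpha)}L_l^{(\alpha)}$ at $s=1-ik$ together with a Jacobi analogue of the collapse identity, and you should be aware that the rank-two input is weight-specific on either route (the paper cites separate ``replacement working'' for the Laguerre form of the Tracy--Widom identity). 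So: a legitimate and arguably more conceptual alternative skeleton, but with its hardest step deferred rather than proved.
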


\begin{proof} (partial working  only)
The derivation of the result for $S_N^{{\rm LUE}_{N,\alpha}}(k)$ follows ideas used in the derivation of
the evaluation formula for $S_N^{{\rm GUE}_N}(k)$ due to Okuyama \cite{Ok19}, or more precisely a
reworking of those ideas given in \cite{Fo21a}. We will therefore focus attention on this latter
calculation. Due to the eigenvalues of the GUE and LUE being
determinantal point processes, with kernels that furthermore can be simplified according to the
Christoffel-Darboux summation (see \cite[Prop.~5.1.3]{Fo10}), the task of evaluating
(\ref{Sa1}) reduces to that of computing
$
I_N(k):= \int_{\Omega} dx \,  e^{i k x}  \int_{\Omega} dy \,  e^{-i k y}  ( 
 K_N(x,y)
   )^2$, where
 $$ 
          K_N(x,y):=(w_2(x) w_2(y))^{1/2}  {p_N(x) p_{N-1}(y) - p_{N-1}(x) p_N(y) \over \langle p_{N-1}, p_{N-1} \rangle (x - y)}.
$$
Here $\{p_j(x) \}$ are the (monic) orthogonal polynomials corresponding the (Gaussian or
Laguerre) weight $w_2(x)$ with inner product $\langle f,g \rangle := \int_{\Omega} w_2(x) f(x) g(x) \, dx$.

An important detail is to generalise the definition of $I_N(k)$ to a function of two variables
$I_N(t_1,t_2):= \int_{\Omega} dx \,  e^{i t_1 x}  \int_{\Omega} dy \,  e^{i t_2 y}  ( 
 K_N(x,y))^2$.
With this done one notes 
\begin{multline}\label{Sa1.1}
(t_1 - t_2) \Big ( {\partial \over \partial  t_1} + {\partial \over \partial  t_2} \Big )
 I_N(t_1,t_2 ) =  
 \int_{\Omega} dx \,  \int_{\Omega} dy \, (x-y) \Big (  \Big ( {\partial \over \partial x} + {\partial \over \partial y} \Big )
e^{it_1x+it_2y}  \Big )  ( 
 K_N(x,y)
   )^2.
 \end{multline} 
   Although this is true in general, its use is particular to the Gaussian case due to the identity \cite{TW94c} \cite[\S 5.4.2]{Fo10}
 \begin{equation}\label{Sa1.2}  
    \Big ( {\partial \over \partial x} + {\partial \over \partial y} \Big )
 K_N^{( G)} (x,y) = - {1 \over  \langle p_{N-1}, p_{N-1} \rangle^{(G)}} \Big ( \psi_N^{(G)}(x)  \psi_{N-1}^{(G)}(y) + \psi_{N-1}^{(G)}(x)  \psi_{N}^{(G)}(y) \Big ),
  \end{equation} 
 where $ \psi_{n}^{(G)}(x) := (w_2(x))^{1/2} p_n^{(G)}(x)$; see \cite[\S 3.2]{Fo21b} for the replacement working
 in the Laguerre case.  Integration by parts in (\ref{Sa1.1}) and  use of (\ref{Sa1.2}) then shows that the RHS of the former  is equal to
 \begin{equation}\label{Sa1.3}   
 {2 \over (  \langle p_{N-1}, p_{N-1} \rangle^{(G)})^2} \int_{\mathbb R^2} dx dy \, e^{it_1x+ it_2y} \Big (
 ( \psi_N^{(G)}(x)  \psi_{N-1}^{(G)}(y) )^2 -  ( \psi_N^{(G)}(y)  \psi_{N-1}^{(G)}(x) )^2 \Big ).
  \end{equation} 
  The integrals herein factorise into one-dimensional integrals, each of which can be evaluated in
  terms of the product of a Gaussian, a power function, and a Laguerre polynomial in squared
  variables; see \cite[Prop.~13]{Fo10} and references therein. This then allows the RHS of (\ref{Sa1.1}) to
  be identified as proportional to  $(t_`^2 - t_2^2)K^{{\rm LUE}_{N,0}}(t^2_1/2,t^2_2/2)$. After cancelling a
  factor of $(t_1 - t_2)$ from both sides, the limit $t_2 \to -t_1 = -t$ can now be taken,
  with the LHS reducing to ${d \over dt}I_N(t)$. On the RHS,
  noting that $K^{{\rm LUE}_{N,0}}(t^2_1/2,t^2_2/2)$ is then equal to $ \rho_{(1)}^{{\rm LUE}_{N,0}}(t^2/2)$,
  the stated evaluation
  formula for $S_N^{{\rm GUE}_N}(k)$ now follows upon a single integration.
\end{proof}

Next we turn to consider the analogue of (\ref{Sa0}) for the Ginibre ensemble of complex Gaussian matrices.
Up to proportionality, the eigenvalue PDF for the latter is (see e.g.~\cite{BF22})
$$
\prod_{l=1}^N e^{- | z_l |^2} \prod_{1 \le j < k \le N} | z_k - z_j|^2, \quad z_l \in \mathbb C.
$$
Upon the scaling $z_l \mapsto \sqrt{N} z_l$ the leading eigenvalue support is the unit disk; let this scaled
version of the Ginibre ensemble be denoted GinUE${}^*$. The so-called (connected) dissipative spectral form 
factor associated with this ensemble is specified by \cite{LPC21}
 \begin{equation}\label{Sa1.4}
 S_N^{\rm GinUE^*}(|T|) = \Big \langle \Big |  \sum_{j=1}^N e^{i \bar{T} z_j + i T \bar{z}_j}  \Big |^2  \Big \rangle -
 \Big |   \Big \langle  \sum_{j=1}^N e^{i \bar{T} z_j}   \Big \rangle  \Big |^2, \quad T \in \mathbb C.
\end{equation}
Its evaluation as a double sum involving Laguerre polynomials is known from \cite{LPC21}, \cite{GSV23}.
Very recently \cite{SK24}, the latter computation has been generalised to the elliptic Ginibre ensemble of
asymmetric complex Gaussian matrices (denoted eGinUE; see \cite[\S 2.3]{BF22}). By varying the parameter therein, the original
GinUE${}^*$ is obtained in one limit, and the GUE in another, the latter with the scaling $x_j \mapsto \sqrt{N \over 2} x_j$.
Denote the GUE so scaled by GUE${}^*$. With this as the ensemble average of (\ref{Sa1}) we
write $ S_N^{\rm GUE^*}(k)$. Comparing the two limiting expressions revealed an inter-relation between
$ S_N^{\rm GinUE^*}(|T|) $ and $ S_N^{\rm GUE^*}(k)$ (the latter, up to scaling, admitting the evaluation
formula of Proposition \ref{P4.1}).

\begin{prop} (\cite[Eq.~(23)]{SK24})
We have
 \begin{equation}\label{Sa1.5}
 S_N^{\rm GinUE^*}(|T|) = e^{-|T|^2/4}  S_N^{\rm GUE^*}(|T|/2).
 \end{equation}
 \end{prop}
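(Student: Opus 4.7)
The natural strategy is to view both sides of (\ref{Sa1.5}) as limits of a single interpolating object, the spectral form factor of the elliptic Ginibre ensemble eGinUE, which carries an asymmetry parameter $\tau \in [0,1]$ and whose exact finite-$N$ evaluation is the main technical input of \cite{SK24}. At $\tau = 0$ one recovers the symmetric Ginibre case $S_N^{\rm GinUE^*}(|T|)$, while $\tau \to 1$, combined with the rescaling $x_j \mapsto \sqrt{N/2}\, x_j$ noted above the proposition, recovers $S_N^{\rm GUE^*}$. The plan is therefore to record the closed-form eGinUE evaluation, separate its $\tau$-dependence, and then simply compare the two limits.

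Concretely, I would begin by writing $S_N^{\rm eGinUE}(|T|;\tau)$ as a finite sum involving associated Laguerre polynomials (or equivalently as a derivative of a Hermite polynomial generating function), following \cite{SK24}. Based on how the eGinUE kernel interpolates between the Ginibre and GUE kernels, I expect $\tau$ to enter in two distinct places: as an overall Gaussian prefactor $\exp(-\alpha(\tau)|T|^2)$ with $\alpha(0) = 0$, $\alpha(1) = 1/4$, and as a rescaling of the spectral argument $|T| \mapsto \beta(\tau)|T|$ with $\beta(0) = 1$, $\beta(1) = 1/2$ inside the non-Gaussian Laguerre sum. Taking the ratio of the $\tau = 0$ and $\tau = 1$ limits then yields (\ref{Sa1.5}) directly. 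As an independent check, one can substitute the Ginibre correlation kernel $K_N(z,\bar w) \propto e^{-(|z|^2+|w|^2)/2} \sum_{j=0}^{N-1} (z\bar w)^j / j!$ into the Fourier integral defining $S_N^{\rm GinUE^*}(|T|)$ and complete the square in $z$ and $\bar w$; the Gaussian shift extracts the prefactor, and the residual integral collapses onto the Hermite polynomial integrals that underlie $S_N^{\rm GUE^*}$ via Proposition \ref{P4.1}.

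The main obstacle I foresee is the bookkeeping of the two different scaling conventions: $z_j \mapsto \sqrt{N}\, z_j$ for GinUE${}^*$ versus $x_j \mapsto \sqrt{N/2}\, x_j$ for GUE${}^*$. These factors of $\sqrt 2$ are precisely what convert the natural argument $|T|$ on the complex side into $|T|/2$ on the Hermitian side, and what turn the Gaussian exponent $|T|^2/2$ produced by the completion of the square into $|T|^2/4$. Once these normalizations are fixed consistently, both in the eGinUE interpolation and in the direct kernel manipulation, the remainder is routine algebra with Laguerre and Hermite generating functions.
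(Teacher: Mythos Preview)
Your approach is correct and is essentially the one the paper attributes to \cite{SK24}: compute the eGinUE spectral form factor with asymmetry parameter $\tau$, take the $\tau=0$ (GinUE${}^*$) and $\tau\to 1$ (GUE${}^*$) limits, and compare the resulting closed forms. The paper does not supply its own proof beyond this description, so your plan matches it; your additional direct check via completing the square in the Ginibre kernel is a nice independent verification not mentioned in the paper.
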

 
 In the case of the elliptic Ginibre ensemble, the average in (\ref{Sa1.4}) is no longer
 dependent on just $|T|$, but rather $t,s$ with $T=t+is$, so one has $S_N^{\rm eGinUE^*}(t,s)$.
 Nonetheless it was shown in \cite[Eq.~(24)]{SK24} that this latter quantity is related to
 $S_N^{\rm GUE^*}(k)$ by a suitable generalisation of (\ref{Sa1.5}).

 \subsection{Discrete determinantal point processes}
 Generally a determinantal point process on the real line is characterised by correlation kernel
 $K^{\rm c}(x,y)$ (we use the superscript ``c'' to indicate that the points take on continuous values)
 such the general $k$-point correlation function is given by
 \begin{equation}\label{D1} 
 \rho_{(k)}(x_1,\dots,x_k) = \det [ K^{\rm c}(x_{j_1}, x_{j_2})]_{j_1,j_2=1,\dots,k}.
 \end{equation}
 Denote by $E^{\rm c}(k;J)$ the  probability that for this point process the interval $J$ contains exactly $k$
 points, and denote by $\mathcal E^{\rm c}(J;\xi)$ the  generating function
 $\mathcal E^{\rm c}(J;\xi) := \sum_{k=0}^\infty (1 - \xi)^k E^{\rm c}(k;J)$. Then it is a standard fact
 (see e.g.~\cite[Eq.~(9.15)]{Fo10}) that
 \begin{equation}\label{D3}  
 \mathcal E^{\rm c}(J;\xi)  = \det ( \mathbb I - \xi \mathcal K_J^{\rm c}),
 \end{equation} 
where $ \mathcal K_J^{\rm c}$ is the integral operator on the interval $J$ with kernel
$ K^{\rm c}(x,y)$. The meaning of det in (\ref{D3}) is  the Fredholm determinant of the integral
operator, which is equal to the product $\prod_{l}(1 - \xi \lambda_l)$, where $\lambda_l$ are
the eigenvalues of $ \mathcal K_J^{\rm c}$.

Instead of taking continuous values, in the above setting suppose the points must take on discrete values
from $\{y_l\}$, and let the correlation kernel then be denoted  $K^{\rm d}(x,y)$. Then, in addition to
the analogue of (\ref{D3}) one has for the generating function
\begin{equation}\label{D4} 
 \mathcal E^{\rm d}(J;\xi)  = \det [ \mathbb I_{|J|} - \xi K_J^{\rm d}(y_{j_1},y_{j_2})]_{\{y_{j_1}\}, \{ y_{j_2} \} \in J},
 \end{equation} 
where it is now understood that $J$ is discrete set contained in the support, and det refers to the usual determinant.
 The interplay between (\ref{D3}) and
(\ref{D4}) underpins a class of duality relations linking a random matrix gap probability to a gap probability for
a determinantal point process on a discrete set \cite{BO17,KS22}. Here we will present just one example of this
type.

\begin{prop}
(\cite[Th.~3.7]{BO17} for the case $\xi = 1$, \cite[\S 3.1.2]{KS22} in general)
Let LUE${}_{N,\alpha}$ denote the Laguerre unitary ensemble as previously defined. . Denote by
dLUE${}_{N,\alpha}$ the discrete Laguerre unitary ensemble specified as the discrete determinantal point process
defined on the positive integer lattice with correlation kernel
$$
K^{\rm d}(n_1,n_2) = \int_s^\infty \psi_{n_1}^{\rm L}(x)  \psi_{n_2}^{\rm L}(x) \, dx, \quad  \psi_{n}^{\rm L}(x) :=
\Big ( {n! \over \Gamma(n+\alpha + 1)} \Big )^{1/2} x^{\alpha/2} e^{-x/2} L_n^{(\alpha)}(x),
$$
where $s>0$ is a parameter.
We have
\begin{equation}\label{D5} 
\mathcal E^{{\rm c}, \, {\rm LUE}_{N,\alpha}}((s,\infty);\xi) = \mathcal E^{{\rm d}, \, {\rm dLUE}_{N,\alpha}}(\{0,1,\dots,N-1\};\xi).
 \end{equation} 
 \end{prop}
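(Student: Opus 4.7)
The plan is to exhibit both sides of (\ref{D5}) as Fredholm determinants of operators that arise as $AA^{*}$ and $A^{*}A$ for a single rectangular operator $A$, and then invoke the standard identity $\det(\mathbb{I}-\xi AA^{*})=\det(\mathbb{I}-\xi A^{*}A)$ (a functional-analytic form of Sylvester's determinant theorem). This is the natural framework because the continuous side is controlled by a finite-rank projection kernel and the discrete side is its Gram matrix.

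First I would recall that the correlation kernel of ${\rm LUE}_{N,\alpha}$ can be written in the orthonormalized form $K_{N}^{\rm c}(x,y)=\sum_{n=0}^{N-1}\psi_{n}^{\rm L}(x)\psi_{n}^{\rm L}(y)$, with $\psi_{n}^{\rm L}$ as in the statement of the proposition; this is the unnormalized Laguerre version of the Christoffel--Darboux sum, and using the summed form (rather than the two-term Christoffel--Darboux identity) is crucial here. By (\ref{D3}), the left-hand side of (\ref{D5}) equals $\det(\mathbb{I}-\xi\mathcal{K}_{(s,\infty)}^{\rm c})$, where $\mathcal{K}_{(s,\infty)}^{\rm c}$ is the integral operator on $L^{2}(s,\infty)$ with this kernel.

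Next I would introduce the operator $A\colon\ell^{2}(\{0,1,\dots,N-1\})\to L^{2}(s,\infty)$ defined on the standard basis by $Ae_{n}=\psi_{n}^{\rm L}\big|_{(s,\infty)}$, equivalently $(Ac)(x)=\sum_{n=0}^{N-1}c_{n}\psi_{n}^{\rm L}(x)$ for $x>s$. The adjoint $A^{*}\colon L^{2}(s,\infty)\to\ell^{2}(\{0,1,\dots,N-1\})$ acts by $(A^{*}f)_{n}=\int_{s}^{\infty}\psi_{n}^{\rm L}(x)f(x)\,dx$. A direct computation then gives
\begin{equation*}
(AA^{*}f)(x)=\sum_{n=0}^{N-1}\psi_{n}^{\rm L}(x)\int_{s}^{\infty}\psi_{n}^{\rm L}(y)f(y)\,dy=\int_{s}^{\infty}K_{N}^{\rm c}(x,y)f(y)\,dy,
\end{equation*}
so $AA^{*}=\mathcal{K}_{(s,\infty)}^{\rm c}$, while
\begin{equation*}
(A^{*}Ac)_{n}=\int_{s}^{\infty}\psi_{n}^{\rm L}(x)\sum_{m=0}^{N-1}c_{m}\psi_{m}^{\rm L}(x)\,dx=\sum_{m=0}^{N-1}K^{\rm d}(n,m)\,c_{m},
\end{equation*}
so $A^{*}A$ is precisely the $N\times N$ matrix $[K^{\rm d}(n_{1},n_{2})]_{n_{1},n_{2}=0}^{N-1}$ appearing on the right-hand side of (\ref{D5}) via (\ref{D4}) with $J=\{0,1,\dots,N-1\}$.

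To finish, I would apply the identity $\det(\mathbb{I}-\xi AA^{*})=\det(\mathbb{I}-\xi A^{*}A)$, which holds whenever $A$ is finite rank (so both determinants reduce to a common product over the finitely many non-zero eigenvalues shared by $AA^{*}$ and $A^{*}A$, by the usual $AA^{*}$--$A^{*}A$ spectral correspondence). Since $A$ has rank at most $N$, the trace-class issues are automatic and no further analytic input is required. The only point that needs verification is the ambient integrability of the $\psi_{n}^{\rm L}$ near $x=s$ and at infinity (so that the entries of $K^{\rm d}$ are finite), which is immediate from the explicit form $\psi_{n}^{\rm L}(x)\propto x^{\alpha/2}e^{-x/2}L_{n}^{(\alpha)}(x)$. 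The mildly non-trivial step in the proof is the clean identification of the two factorizations $AA^{*}$ and $A^{*}A$; once that is in place the determinantal identity does all the work, and no delicate Fredholm theory or Painlev\'e-type input is needed.
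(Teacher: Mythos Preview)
Your proof is correct, but it takes a somewhat different route from the paper's. The paper works directly from the joint eigenvalue PDF: it writes
\[
\mathcal E^{{\rm c},\,{\rm LUE}_{N,\alpha}}((s,\infty);\xi)\ \propto\ \int_{(0,\infty)^N}\prod_{l=1}^N(1-\xi\,\mathbbm 1_{x_l>s})\,x_l^\alpha e^{-x_l}\prod_{j<k}(x_k-x_j)^2\,dx,
\]
replaces the squared Vandermonde by $(\det[\psi_{j-1}^{\rm L}(x_l)])^2$, applies Andr\'eief's identity, and then uses the full-line orthonormality $\int_0^\infty\psi_{j}^{\rm L}\psi_{k}^{\rm L}=\delta_{jk}$ to turn the resulting $N\times N$ determinant into $\det(\mathbb I_N-\xi K^{\rm d})$. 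Your argument instead starts from the Fredholm form $\det(\mathbb I-\xi\mathcal K_{(s,\infty)}^{\rm c})$, factors $\mathcal K_{(s,\infty)}^{\rm c}=AA^{*}$ through the finite-dimensional space $\ell^2(\{0,\dots,N-1\})$, and invokes $\det(\mathbb I-\xi AA^{*})=\det(\mathbb I-\xi A^{*}A)$.

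The two arguments are close cousins --- Andr\'eief's identity is, at bottom, a determinantal manifestation of the same $AB\leftrightarrow BA$ spectral equivalence you use --- but they emphasise different things. The paper's derivation is entirely elementary and requires no operator-theoretic language, just the explicit PDF and a classical integration identity. Your version is more structural: it explains the duality as an instance of the partial-isometry mechanism (this is in fact the viewpoint of the reference \cite{KS22} cited in the statement), and it generalises immediately to any finite-rank projection kernel without needing to revisit the underlying multiple integral. Either way the crux is the same finite-rank factorisation of $K_N^{\rm c}$ through the orthonormal Laguerre functions.
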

 
 \begin{proof}
Use of the formula \cite[Prop.~8.1.2]{Fo10} gives
\begin{multline}
\mathcal E^{{\rm c}, \, {\rm LUE}_{N,\alpha}}((s,\infty);\xi)  \propto \prod_{l=1}^N \int_0^\infty dx_1 \, (1 - \xi \mathbbm 1_{x_1 \in (s,\infty)}) \cdots
 \int_0^\infty dx_N \, (1 - \xi \mathbbm 1_{x_N \in (s,\infty)})  \\
  \times \prod_{l=1}^N x_l^\alpha e^{-x_l} \prod_{1 \le j < k \le N} (x_k - x_j)^2.
\end{multline}
By use of the Vandermonde determinant identity we have that the second line in this expression can be written, up to
proportionality, as $( \det [ \psi_{j-1}^{\rm L}(x_l) ]_{j,l=1,\dots,N})^2$ (see \cite[Proof of Prop.~5.1.1 for the required working]{Fo10}).
Use now of Andr\'eief's integration formula \cite{Fo19} and  the orthogonality $\int_0^\infty \psi_{j_1-1}^{\rm L}(x)  \psi_{j_2-1}^{\rm L}(x) \, dx =
\delta_{j_1,j_2}$ gives
$$
\mathcal E^{{\rm c}, \, {\rm LUE}_{N,\alpha}}((0,s);\xi)  = \det \Big [ \mathbbm 1_{N} - \xi
\int_s^\infty \psi_{j_1-1}(x) \psi_{j_2 - 1}(x) \, dx \Big ]_{j_1,j_2=1,\dots,N}.
$$
Comparison with (\ref{D4}) implies the stated result.
  \end{proof}
  
  There are other examples of identities between random matrix averages and gap probabilities for
   discrete determinantal point processes. One, which in a random matrix context was isolated by
   Borodin and Okounkov \cite{BO00} in the year 2000, and earlier still by Geronimo and Case \cite{GC79}
   in the context of Toeplitz determinant theory, identifies a discrete kernel (in the form of a double contour integral) of
   a Wiener-Hopf (discrete Fredholm) operator $\mathcal K_{\{N+1,N+2,\dots \}}^{\rm d}$ such that
 \begin{equation}\label{D6}
 \Big \langle e^{ \sum_{j=1}^N c(\theta_j)} \Big \rangle_{{\rm CUE}_N} = e^{N c_0 + \sum_{p=1}^\infty p c_{-p} c_p}
 \det ( \mathbb I -   \mathcal K_{\{N+1,N+2,\dots \}}^{\rm d}).
 \end{equation}  
 A concrete example, obtained in the original paper \cite[Example 1]{BO00} reads
  \begin{equation}\label{D7}
 \Big \langle e^{ s \sum_{j=1}^N \cos \theta_j } \Big \rangle_{{\rm CUE}_N} =  e^{s^2/4}  \det ( \mathbb I -   \mathcal K_{\{N+1,N+2,\dots \}}^{\rm dB}),
 \end{equation}  
 where the superscript ``dB'' refers to the discrete Bessel kernel \cite{BOO00}, defined for $s > 0$ by
   \begin{equation}\label{D8}
 K^{\rm dB}(n_1,n_2) =  s {J_{n_1}(s) J_{n_2+1}(s) - J_{n_1+1}(s) J_{n_2}(s) \over 2 (n_1 - n_2)}.
 \end{equation}  
 
 Identities relating to (\ref{D7}) can be further developed. Specifically, the class of random matrix dualities not part of this review,
 involving averages over characteristic polynomials (recall the opening paragraph of the Introduction, and the example
 (\ref{1.0})), can be used to show \cite{Fo93c}
   \begin{equation}\label{D9} 
  \Big \langle e^{ s \sum_{j=1}^N \cos \theta_j } \Big \rangle_{{\rm CUE}_N} =  e^{s^2/4}   \det ( \mathbb I - \mathcal K_{(0,s^2)}^{\rm B} ), \quad \alpha \in \mathbb Z_{\ge 0},
  \end{equation}  
 where the superscript ``B'' refers to the (continuous) Bessel kernel \cite{Fo93a}
   \begin{equation}\label{D10}
 K^{\rm B}(x,y) = {J_\alpha(\sqrt{x}) \sqrt{y} J_\alpha'(\sqrt{y}) -  \sqrt{x}) J_\alpha'(\sqrt{x})J_\alpha(\sqrt{y}) \over 2 (x - y) }.
 \end{equation}   
 An identity between continuous and discrete Fredholm determinants implied by equating (\ref{D9}) with (\ref{D7}) in the case $N = \alpha$.
 In fact this identity holds more generally with the inclusion of a $\xi$ parameter,
  \begin{equation}\label{D11} 
   \det ( \mathbb I -  \xi  \mathcal K_{\{\alpha +1,\alpha +2,\dots \}}^{\rm dB}) =   \det ( \mathbb I - \xi \mathcal K_{(0,s^2)}^{\rm B} ), \quad \alpha \in \mathbb Z_{\ge 0},
 \end{equation}   
 as established by Moriya, Nagao and Sasamoto \cite{MNS19}.

  \subsection*{Acknowledgements}
	This work  was supported
	by the Australian Research Council 
	 Discovery Project grant DP210102887.

\providecommand{\bysame}{\leavevmode\hbox to3em{\hrulefill}\thinspace}
\providecommand{\MR}{\relax\ifhmode\unskip\space\fi MR }
% \MRhref is called by the amsart/book/proc definition of \MR.
\providecommand{\MRhref}[2]{%
  \href{http://www.ams.org/mathscinet-getitem?mr=#1}{#2}
}
\providecommand{\href}[2]{#2}

  \end{document}